\def\eps{\epsilon}
\begin{document}

\markboth{R. Cole, V. Gkatzelis, and G. Goel}{Mechanism Design for Fair Division}

\title{Mechanism Design for Fair Division:\\ \indent {\small Allocating Divisible Items without Payments}}
\author{RICHARD COLE
\affil{Courant Institute, New York University}
VASILIS GKATZELIS
\affil{Courant Institute, New York University}
GAGAN GOEL
\affil{Google Research, New York}}

\begin{abstract}
We revisit the classic problem of fair division from a mechanism design perspective, using {\em Proportional Fairness} 
as a benchmark. In particular, we aim to allocate a collection of divisible items to a set of agents while incentivizing
the agents to be truthful in reporting their valuations.
For the very large class of homogeneous valuations, we design a truthful mechanism that provides {\em every agent} with 
at least a $1/e\approx 0.368$ fraction of her Proportionally Fair valuation. To complement this result, we show that no
truthful mechanism can guarantee more than a $0.5$ fraction, even for the restricted class of additive linear valuations.
We also propose another mechanism for additive linear valuations that works really well when every item is highly demanded.
To guarantee truthfulness, our mechanisms discard a carefully chosen fraction of the allocated resources;
we conclude by uncovering interesting connections between our mechanisms and known mechanisms that use money instead.
\end{abstract}



\keywords{Proportional Fairness, Mechanism Design, Fair Division, Competitive Equilibrium from Equal Incomes}


\begin{bottomstuff}
This work was supported in part by NSF grants CCF-0830516
and  CCF-1217989. The second author was an intern at Google when part of this work took place.
\end{bottomstuff}

\maketitle

\section{Introduction}

From inheritance and land dispute resolution to treaty negotiations and divorce settlements,
the problem of fair division of diverse resources has troubled man since antiquity. Not
surprisingly, it has now also found its way into the highly automated, large scale world
of computing. As the leading internet companies 
guide
the paradigm shift into
cloud computing, more and more services that used to be run on isolated machines are being
migrated to shared computing clusters. Moreover, instead of just human beings bargaining or negotiating,
one now also finds programmed strategic agents 
seeking
resources. The goal of the resulting multiagent resource allocation problems~\cite{MARA} is to find solutions
that are fair to the agents without introducing unnecessary inefficiencies.

One of the most challenging facets of this change is the need for higher quality incentive
design in the form of protocols or mechanisms. As the peer-to-peer revolution has taught us,
a proper set of incentives can make or break a system as the number of agents grows~\cite[Chapter 23]{AGT}.
We therefore revisit this classic fair division problem from a purely mechanism design approach, aiming to
create simple and efficient mechanisms that are not susceptible to strategic manipulation 
by
the participating agents; in particular, we want to design {\em truthful} mechanisms for fair division of heterogeneous goods.

One distinguishing property of resource allocation protocols in computing
is that, more often than not, they need to eschew monetary transfers completely.
This is so because, for instance, agents could represent internal teams in an
internet company 
which
are competing for resources. This, of course,
severely limits what the mechanism designer can achieve since the collection of payments
is the most 
versatile
method for designing truthful mechanisms.
In light of this, essentially the only tool left for aligning the agents'
incentives with the objectives of the system is what Hartline and Roughgarden referred to as ``money
burning''~\cite{HR08}. That is, the system can choose to intentionally degrade the quality of its services
(in our case this will mean discarding resources) in order to influence the preferences of the agents.
This degradation of service can often be interpreted as an implicit form of ``payment'', but since
these payments do not correspond to actual trades, they are essentially burned or used for other purposes.

But even before dealing with the fact that the participating agents may behave strategically, one
first needs to ask what 
is
the right objective for fairness.
This question alone has been the subject of long debates,
in both social science and game theory, leading to a very rich literature. At the time of writing this paper, there are
{\em five} academic books ~\cite{Young94,BramsT96,RobertsonWebb98,Moulin03,Barbanel04} written on the topic
of fair division, providing an overview of various proposed solutions for fairness. In this
paper we will be focusing on resources that are {\em divisible}; for such settings, the most attractive
solution for efficient and fair allocation is the Proportionally Fair solution (PF).
In brief, a PF allocation is a Pareto optimal allocation $x^*$ which compares
favorably to any other Pareto optimal allocation $x$ in the sense that, when switching from $x$
to $x^*$, the aggregate percentage gain in happiness of the agents outweighs the aggregate percentage loss.
The notion of PF was first introduced in the seminal work of \citet{Kelly}  in the context
of TCP congestion control. Since then it has become the de facto solution for bandwidth sharing
in the networking community, and is in fact the {\em most widely} implemented solution in practice
(for instance see~\cite{AndrewsQS05})\footnote{We note that some of the earlier work
on Proportional Fairness such as \cite{Kelly} and \cite{K+} have 2000+ and 3900+ citations respectively in
Google Scholar, 
indicating
the importance and usage of this solution.}. The wide adoption of PF
as the solution for fairness is not a fluke, but is grounded in the fact that PF is equivalent to the
Nash bargaining solution~\cite{Nash50}, and to the Competitive Equilibria with Equal Incomes
(CEEI)~\cite{HZ79, Varian74, Eisenberg61} for a large class of valuation functions. Both Nash bargaining
and the CEEI are well regarded solutions in microeconomics for bargaining and fairness.

A notable property of the PF solution is that it gives a
good tradeoff between fairness and efficiency. One extreme notion of fairness is the Rawlsian notion of
the egalitarian social welfare that aims to maximize the quality of service of the least satisfied
agent irrespective of how much inefficiency this might be causing. On the other extreme, the utilitarian
social welfare approach aims to maximize efficiency while disregarding how unsatisfied some agents might become.
The PF allocation lies between these two extremes by providing a significant fairness guarantee without
neglecting efficiency. As we showed in a recent work~\cite{CGG13}, for instances with just two players
who have affine valuation functions the PF allocation has a social welfare of at least $0.933$ times the optimal one.

Unfortunately, the PF allocation has one significant drawback: it cannot 
be implemented using truthful mechanisms
without the use of payments; even for simple instances involving just two agents
and two items, it is not difficult to show that no truthful mechanism can obtain a PF
solution. This motivates the following natural question: can one design truthful mechanisms that yield a good approximation to the PF solution?
Since our goal is 
to obtain
a fair division, we seek a strong notion of approximation in which every agent gets a good approximation of her PF valuation.
One of our main results is to give a truthful mechanism which guarantees that every agent will receive at least a $1/e$ fraction of her PF valuation
for a very large class of valuation functions. We note that this is one of the very few positive
results in multi-dimensional mechanism design without payments. We 
demonstrate the
hardness of achieving
such truthful approximations by providing 
an almost matching negative result for a 
restricted class of valuations.

While a $1/e$ approximation factor is quite surprising for such a general setting,
in some circumstances one would prefer to restrict the setting in order to achieve
a ratio much closer to 1.
Our final result concerns such a scenario, which is motivated by the real-world
privatization auctions that took place in Czechoslovakia in the early 90s.
At that time, 
%
the Czech government sought to privatize the state
owned firms dating
from the then recently ended communist era. The government's goal was
two-fold --- first, to distribute shares of these companies to their
citizens in a fair manner, and second,
to calculate the market prices of these
companies so that the shares could be traded in the open market after
the initial allocation. To this end, they ran an auction, as described
in~\cite{AH2000}.
Citizens could choose to participate by buying 1000 vouchers at a cost
of 1,000 Czech Crowns, about \$35, a fifth of the average monthly salary.
Over 90\% of those eligible participated.
These vouchers were then used to bid for shares in the available 1,491 firms.
We believe that the PF allocation provides a very appropriate solution
for this example, both to calculate a fair allocation and to compute
market prices.
Our second mechanism solves the problem of finding 
allocations very close to the PF allocation
in a truthful fashion for such natural scenarios where there is high demand for each resource.

\subsection{Our results}
In this work we provide some surprising positive results for the problem of multi-dimensional mechanism
design without payments. We focus on allocating divisible items and we use the widely
accepted solution of proportional fairness as the benchmark regarding the valuation
that each participating player deserves. In this setting, we undertake the design of truthful mechanisms
that approximate this solution; we consider a strong notion of approximation, requiring that every
player receives a good fraction of the valuation that she deserves according to the proportionally
fair solution of the instance at hand.

The main contribution of this paper is the {\em Partial Allocation} mechanism.
In Section~\ref{sec:PAmech} we analyze this mechanism and we prove that it is truthful and 
that it guarantees that every player will receive at least a $1/e$ fraction of her proportionally 
fair valuation. These results hold for the very general class of instances with players
having arbitrary homogeneous valuation functions of degree one. This includes
a wide range of well studied valuation functions, from additive linear and Leontief,
to Constant Elasticity of Substitution and Cobb-Douglas~\cite{MWG}.
We later show that for the cases of additive linear and Leontief valuation functions 
the outcomes of this mechanism satisfy envy-freeness. Also, we extend both
the approximation and the truthfulness guarantees to instances with homogeneous 
valuations of any degree. To complement these positive results, we provide a negative 
result showing that no truthful mechanism can guarantee to every player an allocation with 
value greater than $0.5$ of the value of the PF allocation, even if the mechanism
is restricted to the class of additive linear valuations.
	
In Section~\ref{sec:PFapprox} we show that restricting the set of possible instances to ones involving
players with additive linear valuations\footnote{Note that our negative results imply that the restriction
to additive linear valuations alone would not be enough to enable significantly better approximation factors.}
and items with high prices in the competitive equilibrium from
equal incomes\footnote{The prices induced by the market equilibrium  when all bidders have a
unit of scrip money; also referred to as PF prices.}
will actually enable the design of even more efficient and useful mechanisms.
%
%
We present the {\em Strong Demand Matching} (SDM) mechanism, a truthful mechanism that performs
increasingly well as the competitive equilibrium prices increase.
More specifically, if $p_j^*$ is the price of item $j$, then the approximation factor guaranteed
by this mechanism is equal to $\min_j \left(p_j^*/\left\lceil p_j^* \right\rceil\right)$.
It is interesting to note that scenarios such as the privatization auction
mentioned above involve a number of bidders much larger than the number of items;
as a rule, we expect this to lead to high prices and a very good approximation of the participants' PF valuations.

Finally, in Section~\ref{sec:withmoney} we provide interesting connections between the two
mechanisms that we propose and well known mechanisms that use monetary payments. Specifically,
we reveal a connection between the amount of resources being discarded and monetary payments.
In a nutshell, multiplicative reductions in the bidders' final allocations turn out to have an 
effect which is analogous to monetary payments. As a result, we anticipate that this approach may have 
a significant impact on other problems in mechanism design without money. Indeed, we have already 
applied this approach to the problem
of maximizing social welfare without payments for which a special two-agent version of the
Partial Allocation mechanism allowed us to improve upon a setting for which mostly negative
results were known~\cite{CGG13}.

\subsection{Related Work}
Our setting is closely related to the large topic of fair division or
cake-cutting~\cite{Young94,BramsT96,RobertsonWebb98,Moulin03,Barbanel04},
which has been studied since the 1940's, using the $[0,1]$ interval as the standard representation
of a cake. Each agent's preferences take the form of a valuation function over this interval, and
then the valuations of unions of subintervals 
are additive.
Note that the class of homogeneous valuation functions of degree one takes us beyond this standard cake-cutting
model. Leontief valuations for example, allow for complementarities in the valuations,
and then the valuations of unions of subintervals need not be additive.
On the other hand, the additive linear valuations setting that we focus on in Section~\ref{sec:PFapprox}
is very closely related to cake-cutting with piecewise constant valuation functions over the $[0,1]$ interval.
Other common notions of fairness that have been studied in this literature are,
proportionality\footnote{It is worth distinguishing the notion of PF from that of proportionality
by noting that the latter is a much weaker notion, directly implied by the former.}, envy-freeness,
and equitability~\cite{Young94,BramsT96,RobertsonWebb98,Moulin03,Barbanel04}.

Despite the extensive work on fair resource allocation, truthfulness considerations have not played a major
role in this literature. Most results related to truthfulness were weakened by the assumption that each agent would be
truthful in reporting her valuations unless this strategy was dominated.
Very recent work~\cite{CLPP10,MosselT10,ZivanDOS10,MN12} studies truthful cake cutting variations using the
standard notion of truthfulness according to which an agent 
need
not be truthful unless doing so is a dominant strategy.
\citet{CLPP10} study truthful cake-cutting with agents having piecewise uniform valuations and
they provide a polynomial-time mechanism that is truthful, proportional, and envy-free. They also design randomized
mechanisms for more general families of valuation functions, while \citet{MosselT10} prove the existence
of truthful (in expectation) mechanisms satisfying proportionality in expectation for general valuations.
\citet{ZivanDOS10} aim to achieve envy-free Pareto optimal
allocations of multiple divisible goods while reducing, but not eliminating, the agents' incentives to lie.
The extent to which untruthfulness is reduced by their proposed mechanism is only evaluated empirically and
depends critically on their assumption that the resource limitations are soft constraints.
Very recent work by \citet{MN12} provides evidence that truthfulness comes at a significant
cost in terms of efficiency.

The recent papers of \citet{GC10} and of \citet{HanSTZ11} also consider the
truthful allocation of multiple divisible goods; they focus on additive linear valuations and their
goal is to maximize the social welfare (or efficiency) after scaling every player's reported valuations
so that her total valuation for all items is 1.
\citet{GC10} study two-agent instances, providing both upper and lower bounds
for the achievable approximation; \citet{HanSTZ11} extend these results and also study the multiple agents setting.
For problem instances that may involve an arbitrary number of items both papers provide negative results:
no non-trivial approximation factor can be achieved by any truthful mechanism when the number of players is also unbounded.
For the two-player case, after \citet{GC10} studied some classes of dictatorial
mechanisms, \citet{HanSTZ11} showed that no dictatorial mechanism can guarantee more
than the trivial $0.5$ factor. Interestingly, we recently showed~\cite{CGG13} that combining
a special two-player version of the Partial Allocation mechanism with a dictatorial mechanism
can actually beat this bound, achieving a $2/3$ approximation.

The resource allocation literature has seen a resurgence of work studying fair and efficient
allocation for Leontief valuations~\cite{GZH11,DFH12,PPS12,GN12}. These valuations exhibit perfect
complements and they are considered to be natural valuation abstractions for computing settings where
jobs need resources in fixed ratios. \citet{GZH11} defined the notion
of Dominant Resource Fairness (DRF), which is a generalization of the egalitarian social welfare
to multiple types of resources. This solution has the advantage that it can be implemented
truthfully for this specific class of valuations; as the authors acknowledge, the CEEI solution
would be the preferred fair division mechanism in that setting as well, and 
its main drawback is
the fact that it cannot be implemented truthfully.
\citet{PPS12} assessed DRF in terms of the resulting efficiency, showing that it performs
poorly. \citet{DFH12} proposed an alternate fairness criterion called
Bottleneck Based Fairness, which \citet{GN12} subsequently
showed is satisfied by the proportionally fair allocation. \citet{GN12}
also posed the study of incentives related to this latter notion as an interesting open problem.
Our results could potentially have significant impact 
on this line of work as we are providing
a truthful way to approximate a solution which is recognized as a good benchmark.
It would also be interesting to study the extent to which the Partial Allocation mechanism
can outperform the existing ones in terms of efficiency.

Most of the papers mentioned above contribute to our understanding of the trade-offs between either truthfulness
and fairness, or truthfulness and social welfare. Another direction that has been actively pursued is to
understand and quantify the interplay between fairness and social welfare. \citet{CKKK12}
measured the deterioration of the social welfare caused by different fairness restrictions, the price of
fairness. More recently, \citet{CLPP11} designed algorithms for computing allocations that (approximately)
maximize social welfare while satisfying envy-freeness.  Also, the work of \citet{CFFKO11}, and of \citet{FL12}
considers the question of finding mechanisms that satisfy both truthfulness and envy-freeness without sacrificing
any social welfare.

Our results fit into the general agenda of approximate mechanism design without money, explicitly initiated by
\citet{ProcacciaT09}. More interestingly, the underlying connection with VCG payments proposes
a framework for designing truthful mechanisms without money and we anticipate that this might have a significant
impact on this literature.



\section{Preliminaries}\label{sec:prelim}
Let $M$ denote the set of $m$ items and $N$ the set of $n$ bidders. Each
item is divisible, meaning that it can be divided into arbitrarily small
pieces,
which are
then allocated to different bidders. An allocation $x$ of these
items to the bidders defines the fraction $x_{ij}$ of each item $j$ that each bidder
$i$ will be receiving; let $\mathcal{F}=\left\{x~|~x_{ij}\geq 0 \text{ and } \sum_i x_{ij}\leq 1 \right\}$
denote the set of feasible allocations. Each bidder is assigned a weight $b_i\geq 1$ which allows for interpersonal
comparison of valuations, and can serve as priority in computing applications, as clout in
bargaining applications, or as a budget for the market equilibrium interpretation of our results.
We assume that $b_i$ is defined by the mechanism as it cannot be truthfully elicited 
from the bidders.
The preferences of each bidder $i\in N$ take the form of a valuation function
$v_i(\cdot)$, that assigns nonnegative values to every allocation in $\mathcal{F}$.
We assume that every player's valuation for a given allocation $x$ only depends on
the bundle of items that she will be receiving.

We will present our results assuming that the valuation functions
are homogeneous of degree one, i.e.\ player $i$'s valuation for an allocation $x' = f\cdot x$
satisfies
$v_i(x')=f\cdot v_i(x)$, for any scalar $f>0$. We later discuss how to extend these results to
general homogeneous valuations of degree $d$ for which $v_i(x')=f^d\cdot v_i(x)$.
A couple of interesting examples of homogeneous valuations functions of degree one are additive linear valuations
and Leontief valuations;
according to the former, every player has a valuation $v_{ij}$ for each item $j$ and $v_i(x) = \sum_j x_{ij} v_{ij}$,
and according to the latter, each player $i$'s type corresponds to a set of values $a_{ij}$, one for each item, and
$v_i(x) = \min_j\left\{x_{ij}/a_{ij}\right\}$. (i.e.\ player $i$ desires the items in the ratio $a_{i1}:a_{i2}: \ldots :a_{im}$.)

An allocation $x^*\in \mathcal{F}$ is \emph{Proportionally Fair} (PF) if, for any other allocation $x'\in\mathcal{F}$
the (weighted) aggregate proportional change to the valuations after replacing $x^*$ with $x'$ is not positive, i.e.:
\begin{equation}\label{ineq:PF}
\sum_{i\in N}{\frac{b_i[v_i(x')-v_i(x^*)]}{v_i(x^*)}}\leq 0.
\end{equation}
This allocation rule is a strong refinement of Pareto efficiency, since Pareto efficiency only guarantees that if some player's proportional
change is strictly positive, then there must be some player whose proportional change is negative.
The Proportionally Fair solution can also be defined as an allocation $x\in\mathcal{F}$ that maximizes $\prod_i{[v_i(x)]^{b_i}}$, or
equivalently (after taking a logarithm), that maximizes $\sum_i b_i \log v_i(x)$; we will refer to these two equivalent
objectives as the PF objectives. Note that, although the PF allocation need not be unique for a given instance,
it does provide unique bidder valuations~\cite{EisenbergG59}.

We also note that the PF solution is equivalent to the Nash bargaining solution. John Nash in his seminal paper ~\cite{Nash50} considered an axiomatic
approach to bargaining and gave four axioms that any bargaining solution must satisfy. He showed that these four axioms 
yield
a unique solution
which is captured by a convex program; this convex program is equivalent to the one defined above for the PF solution. Another well-studied allocation
rule which is equivalent to the PF allocation is the {\em Competitive Equilibrium}. \citet{Eisenberg61} showed that if all agents have valuation
functions that are quasi-concave and homogeneous of degree 1, then the competitive equilibrium is also captured by the same convex program as the one for the PF
solution. The {\em Competitive Equilibrium with Equal Incomes} (CEEI) has been proposed as the ideal allocation rule for fairness in microeconomics ~\cite{Varian74,HZ79, B10, OSB10}.


Given a valuation function reported from each bidder, we want to design mechanisms that output
an allocation of items to bidders. We restrict ourselves to truthful mechanisms, i.e.\ mechanisms
such that any false report from a bidder will never return her a more valuable allocation.
Since proportional fairness cannot be implemented via truthful mechanisms, we will measure the performance
of our mechanisms based on the extent to which they approximate this benchmark.
More specifically, the approximation factor, or competitive factor
of a mechanism will correspond to the minimum value
of $\rho(\mathcal{I})$ across all relevant instances $\mathcal{I}$, where
\[ \rho(\mathcal{I}) = \min_{i\in N}\left\{ \frac{v_i(x)}{v_i(x^*)} \right\},\]
and $x,x^*$ are the allocation generated by the mechanism for instance $\mathcal{I}$
and the PF allocation of $\mathcal{I}$ respectively.

\section{The Partial Allocation Mechanism}\label{sec:PAmech}

In this section, we define the {\em Partial Allocation} (PA) mechanism as a novel way to allocate
divisible items to bidders with homogeneous valuation functions of degree one. We subsequently prove
that this non-dictatorial mechanism not only achieves truthfulness, but also guarantees that
every bidder will receive at least a $1/e$ fraction of the valuation that she deserves,
according to the PF solution.
This mechanism depends on a subroutine that computes the PF
allocation for the problem instance at hand; we therefore later study the running time of this subroutine, as
well as the robustness of our results in case this subroutine returns only approximate solutions.

The PA mechanism elicits the valuation function $v_i(\cdot)$ from each player $i$ and it computes the
PF allocation $x^*$ considering all the players' valuations. The final allocation 
$x$ 
output by the mechanism gives
each player $i$ only a fraction $f_i$ of 
her PF bundle $x_i^*$, i.e.\ for every item $j$ of which the PF allocation assigned to her a portion
of size $x^*_{ij}$, the PA mechanism instead assigns to her a portion of size
$f_i \cdot x^*_{ij}$, where
$f_i \in [0,1]$ depends on the extent to which the presence of player $i$ inconveniences the
other players; the value of $f_i$ may therefore vary across different players. The following
steps give a more precise description of the mechanism.
\bigskip		

\LinesNumbered	
\begin{algorithm}[H]
\SetEndCharOfAlgoLine{.}
Compute the PF allocation $x^*$ based on the reported bids\;
For each player $i$, compute the PF allocation $x^*_{-i}$ that would arise in her absence\;
Allocate to each player $i$ a fraction $f_{i}$ of everything that she receives according to $x^*$, where
 \begin{equation}\label{eq:fraction}
  f_i = \left(\frac{\prod_{i'\neq i}{[v_{i'}(x^*)]^{b_{i'}}}}{\prod_{i'\neq i}{[v_{i'}(x^*_{-i})]^{b_{i'}}}}\right)^{1/b_i}.
 \end{equation}
	\caption{The Partial Allocation mechanism.}
\end{algorithm}



\begin{lemma}\label{lem:feasible}
The allocation $x$ produced by the PA mechanism is feasible.
\end{lemma}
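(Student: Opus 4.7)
The plan is to reduce feasibility to the pointwise bound $f_i \leq 1$ for every player $i$, and then establish that bound via a direct comparison between the two PF allocations involved in the definition of $f_i$.

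First I would observe that since $x^*$ is itself a feasible allocation, we have $\sum_i x^*_{ij} \leq 1$ for every item $j$. The PA mechanism gives player $i$ the bundle $f_i x^*_i$, so the total fraction of item $j$ handed out is $\sum_i f_i x^*_{ij}$. If I can show $f_i \leq 1$ for each $i$, then $\sum_i f_i x^*_{ij} \leq \sum_i x^*_{ij} \leq 1$ and feasibility follows immediately.

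Next I would establish $f_i \leq 1$ by unpacking the definition in \eqref{eq:fraction}. It suffices to show that
\[
\prod_{i'\neq i}[v_{i'}(x^*)]^{b_{i'}} \;\leq\; \prod_{i'\neq i}[v_{i'}(x^*_{-i})]^{b_{i'}},
\]
or equivalently, after taking logarithms, that $\sum_{i'\neq i} b_{i'} \log v_{i'}(x^*) \leq \sum_{i'\neq i} b_{i'} \log v_{i'}(x^*_{-i})$. The key observation is that the restriction of $x^*$ to players other than $i$ is a feasible allocation in the $(n-1)$-player instance obtained by removing player $i$: for each item $j$, $\sum_{i'\neq i} x^*_{i'j} \leq 1 - x^*_{ij} \leq 1$. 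Since $x^*_{-i}$ is by definition a maximizer of $\sum_{i'\neq i} b_{i'} \log v_{i'}(\cdot)$ over feasible allocations in that reduced instance, the inequality above follows at once.

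There is essentially no obstacle beyond being careful with the reduction: the argument rests entirely on the fact that removing a player relaxes the feasibility constraint for the remaining players, combined with the variational characterization of the PF allocation as the maximizer of the product objective. I do not need homogeneity of the valuation functions for this lemma, nor any structural property of $x^*_{-i}$ beyond its optimality in the PF objective.
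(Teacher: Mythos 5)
Your proposal is correct and follows essentially the same route as the paper: reduce feasibility to $f_i \in [0,1]$ and obtain $f_i \leq 1$ from the fact that $x^*$ (with player $i$'s share discarded) remains feasible for the reduced instance, over which $x^*_{-i}$ maximizes the weighted PF objective. The only omission is the trivial remark that $f_i \geq 0$ (needed for $x_{ij} \geq 0$), which the paper notes in passing since numerator and denominator in \eqref{eq:fraction} are non-negative.
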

\begin{proof}
Since the PF allocation $x^*$ is feasible, to verify that the allocation produced by the PA mechanism
is also feasible, it suffices to show 
that $f_i \in [0,1]$ for every bidder $i$. The fact that $f_i\geq 0$ is clear since
both the numerator and the denominator are non-negative. To show that $f_i\leq 1$, note that
\[ x^*_{-i} = \arg\max_{x'\in \mathcal{F}}\left\{ \prod_{i'\neq i}{v_{i'}(x')} \right\}. \]
Since $x^*$ remains a feasible allocation ($x^*\in\mathcal{F}$) after removing bidder $i$ (we can just discard bidder
$i$'s share), this implies
\begin{equation*}
\prod_{i'\neq i}{v_{i'}(x^*)} \leq \prod_{i'\neq i}{v_{i'}(x^*_{-i})}.
\end{equation*}
\end{proof}

\subsection{Truthfulness}
We now show that, despite the fact that this mechanism is not dictatorial and does not use
monetary payments, it is still in the best interest of every player to report her true
valuation function, irrespective of what the other players do.

\begin{theorem}\label{thm:truth}
The PA mechanism is truthful.
\end{theorem}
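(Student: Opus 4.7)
The plan is to interpret the fraction $f_i$ as a multiplicative analogue of a VCG payment: the mechanism charges player $i$ the externality she imposes on the others, but expressed as a shrinkage of her bundle rather than as money. Homogeneity of degree one is what lets this shrinkage play the role of a payment.

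Concretely, under truthful reporting, player $i$'s realized value for her bundle is $f_i \cdot v_i(x^*)$ by degree-one homogeneity. The first step is to take the $b_i$-weighted logarithm and use the definition~\eqref{eq:fraction} to rewrite
\[
b_i \log\bigl(f_i\, v_i(x^*)\bigr)
\;=\; \Bigl[\sum_{i'} b_{i'} \log v_{i'}(x^*)\Bigr] \;-\; \Bigl[\sum_{i'\neq i} b_{i'} \log v_{i'}(x^*_{-i})\Bigr].
\]
So up to an additive term that does not depend on player $i$'s report (namely the PF log-objective of the instance obtained by deleting her, which is a function only of the other agents' reports), her utility equals the full PF log-objective evaluated at the allocation $x^*$ that the mechanism chose.

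The second step is to carry out the same calculation when player $i$ deviates to a false report $\tilde v_i$, producing allocation $\tilde x$ and fraction $\tilde f_i$. Her \emph{true} realized value is $\tilde f_i \cdot v_i(\tilde x)$, and the same algebraic expansion gives
\[
b_i \log\bigl(\tilde f_i\, v_i(\tilde x)\bigr)
\;=\; \Bigl[ b_i \log v_i(\tilde x) + \sum_{i'\neq i} b_{i'} \log v_{i'}(\tilde x)\Bigr] \;-\; \Bigl[\sum_{i'\neq i} b_{i'} \log v_{i'}(x^*_{-i})\Bigr],
\]
because $\tilde f_i$ uses $v_{i'}(\tilde x)$ for $i'\neq i$, while $v_i(\tilde x)$ is contributed by homogeneity on $\tilde x = f \cdot \tilde x^{(1)}$ style scaling. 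Crucially, the subtracted term is identical to the one in the truthful case, since $x^*_{-i}$ is computed in $i$'s absence and is independent of her report.

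The final step is to invoke the defining property of $x^*$: since $x^*$ maximizes the true weighted log-objective $\sum_{i'} b_{i'} \log v_{i'}(\cdot)$ over $\mathcal{F}$, the bracketed term in the truthful expression dominates the bracketed term in the deviating expression for every feasible $\tilde x$. Exponentiating recovers $f_i \cdot v_i(x^*) \geq \tilde f_i \cdot v_i(\tilde x)$, i.e.\ truth-telling is weakly dominant. The only step that needs care is the second one: one must check that $\tilde f_i$ is built from $v_{i'}(\tilde x)$ for $i'\neq i$ (the others' reported values on the allocation the mechanism computed) and that the remaining factor $v_i(\tilde x)$ appearing through homogeneity combines with the other agents' contributions to reconstitute the \emph{true} PF objective at $\tilde x$ — this is exactly where reporting $\tilde v_i$ rather than $v_i$ hurts the deviator, since the mechanism optimized for the wrong objective while the deviator is evaluated against the true one.
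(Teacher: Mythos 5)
Your argument is correct and is essentially the paper's own proof written in log space: the two key observations — that the denominator of $f_i$, built from $x^*_{-i}$, is independent of player $i$'s report, and that the numerator combined with her true value of the computed allocation reconstitutes the PF objective, which is maximized at the allocation obtained under truthful reporting — are exactly the paper's steps, argued there multiplicatively rather than after taking logarithms (the VCG/log-space reading you use is spelled out by the paper itself in Section~\ref{sec:withmoney}). One cosmetic caution: for dominant-strategy truthfulness the other players' valuations entering the objective should be their \emph{reported} functions $\bar{v}_{i'}$ rather than their true ones, as the paper is careful to write; your argument goes through verbatim under that substitution since no step uses that those reports are truthful.
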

\begin{proof}
In order to prove this theorem, we approach the PA mechanism from the perspective of some arbitrary player $i$.
Let $\bar{v}_{i'}(\cdot)$ denote the valuation function that each player $i' \neq i$ reports to the PA mechanism.
We assume that the valuation functions reported by these players may 
differ from their true ones, $v_{i'}(\cdot)$.
Player $i$ is faced with the options of, either reporting her true valuation function $v_i(\cdot)$,
or reporting some false valuation function $\bar{v}_i(\cdot)$.
After every player has reported some valuation function, the PA mechanism computes the PF allocation with respect to
these valuation functions; let $x_{{\text{\tiny T}}}$ denote the PF allocation that arises if player $i$ reports the truth
and $x_{{\text{\tiny L}}}$ otherwise. Finally, player $i$ receives a fraction of what the computed PF allocation assigned
to her, and how big or small this fraction will be depends on the computed PF allocation. Let $f_{{\text{\tiny T}}}$ denote the fraction
of her allocation that player $i$ will receive if $x_{{\text{\tiny T}}}$ is the computed PF allocation and $f_{{\text{\tiny L}}}$ otherwise.
Since the players have homogeneous valuation functions of degree one, what we need to show is
that $f_{{\text{\tiny T}}}~ v_i(x_{{\text{\tiny T}}}) \geq f_{{\text{\tiny L}}} ~v_i(x_{{\text{\tiny L}}})$, or equivalently that
\[  [f_{{\text{\tiny T}}}~ v_i(x_{{\text{\tiny T}}})]^{b_i} ~\geq~ [f_{{\text{\tiny L}}} ~v_i(x_{{\text{\tiny L}}})]^{b_i}.\]
Note that the 
denominators
of both
fractions $f_{{\text{\tiny T}}}$ and $f_{{\text{\tiny L}}}$, as 
given
by Equation~\eqref{eq:fraction}, will be the same since 
they are
independent of
the valuation function reported by player $i$. Our problem therefore reduces to proving that
\begin{equation}\label{ineq:truth}
[v_i(x_{{\text{\tiny T}}})]^{b_i} \cdot \prod_{i'\neq i}{[\bar{v}_{i'}(x_{{\text{\tiny T}}})]^{b_{i'}}} ~ \geq ~ [v_i(x_{{\text{\tiny L}}})]^{b_i} \cdot \prod_{i'\neq i}{[\bar{v}_{i'}(x_{{\text{\tiny L}}})]^{b_{i'}}}.
\end{equation}
To verify that this inequality holds we use the fact that the PF allocation is the one that maximizes the product
of the corresponding reported valuations.
This means that
\begin{equation*}
x_{{\text{\tiny T}}} = \arg\max_{x\in \mathcal{F}}\left\{ [v_i(x)]^{b_i} \cdot \prod_{i'\neq i}{[\bar{v}_{i'}(x)]^{b_{i'}}} \right\},
\end{equation*}
and since $x_{{\text{\tiny L}}} \in \mathcal{F}$, this implies that Inequality~\eqref{ineq:truth} holds, and hence reporting
her true valuation function is a dominant strategy for every player $i$.
\end{proof}

The arguments used in the proof of Theorem~\ref{thm:truth} imply that, given the valuation functions
reported by all the other players $i' \neq i$, player $i$ can effectively choose any bundle that 
she wishes, but for each bundle the mechanism defines what fraction 
player $i$ can keep. One can therefore think of the fraction of the bundle
thrown away as a form of non-monetary ``payment'' that the player must suffer in exchange for that bundle, 
with different bundles matched to different payments. The fact that the PA mechanism is truthful implies that these payments,
in the form of fractions, make the bundle allocated to her by allocation $x^*$ the most desirable one. 
We revisit this interpretation in Section~\ref{sec:withmoney}.

\subsection{Approximation}
Before studying the approximation factor of the PA mechanism, we first state a lemma 
which will be useful for proving Theorem~\ref{thm:approx} (its proof is deferred to the Appendix).

\begin{lemma}\label{lem:bound}
For any set of pairs $(\delta_i,\beta_i)$ with $\beta_i\geq 1$ and $\sum_i \beta_i \cdot \delta_i\leq b$
the following holds (where $B =\sum_i \beta_i$)
\[ \prod_i{(1+\delta_i)^{\beta_i}} \leq \left(1 + \frac{b}{B}\right)^B. \]
\end{lemma}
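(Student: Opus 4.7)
The plan is to reduce the inequality to a one-line application of Jensen's inequality applied to the concave function $\log(1+x)$. After taking the logarithm of both sides, the claim becomes
\[
\sum_i \beta_i \log(1+\delta_i) \;\leq\; B \log\!\left(1+\frac{b}{B}\right),
\]
so dividing through by $B$ we need to show
\[
\sum_i \frac{\beta_i}{B}\,\log(1+\delta_i) \;\leq\; \log\!\left(1+\frac{b}{B}\right).
\]

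First I would note that the weights $\beta_i/B$ are nonnegative and sum to $1$, so they form a valid probability distribution over the indices $i$. Since $\log(1+x)$ is concave on its domain, Jensen's inequality then yields
\[
\sum_i \frac{\beta_i}{B}\,\log(1+\delta_i) \;\leq\; \log\!\left(1 + \sum_i \frac{\beta_i}{B}\,\delta_i\right).
\]
Next, using the hypothesis $\sum_i \beta_i \delta_i \leq b$ together with the monotonicity of $\log$, the right-hand side is at most $\log(1+b/B)$. Exponentiating recovers the desired product inequality.

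The only subtlety is ensuring the logarithms are well-defined, i.e.\ $1+\delta_i > 0$ for each $i$; this is implicit because the left-hand side $\prod_i (1+\delta_i)^{\beta_i}$ must itself be well-defined in the intended application (where the $\delta_i$ arise as proportional changes in valuations, hence strictly greater than $-1$). Note also that the hypothesis $\beta_i \geq 1$ is not actually needed for this argument — nonnegativity of $\beta_i$ suffices — so the lemma statement is in fact somewhat stronger than required, which is convenient since the calling context in Theorem~\ref{thm:approx} will satisfy the more restrictive condition anyway. I do not anticipate any real obstacle here; the main ``insight'' is simply recognizing that the constraint couples the $\delta_i$ linearly while the objective couples them through a concave function, which is precisely the setup Jensen's inequality is designed to handle.
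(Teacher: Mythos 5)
Your proof is correct, and it takes a genuinely different route from the paper's. The paper argues elementarily: it first treats the unit-weight case $\beta_i=1$, showing by a first-derivative/induction argument that the product $\prod_i(1+\delta_i)$ under a fixed sum constraint is maximized when all $\delta_i$ are equal to $b/k$; it then handles general weights by clearing denominators (replacing each pair $(\delta_i,\beta_i)$ with $M\beta_i$ copies of $(\delta_i,1)$ for a large integer $M$) and finally passes to real-valued parameters by a rational-approximation limiting argument. Your observation that the statement is exactly weighted AM--GM in disguise --- i.e.\ Jensen for the concave function $\log(1+x)$ with weights $\beta_i/B$, followed by monotonicity of $\log$ and the constraint $\sum_i\beta_i\delta_i\le b$ --- collapses all of this into a few lines, avoids the discretization and limiting steps entirely, and, as you note, only needs $\beta_i\ge 0$ rather than $\beta_i\ge 1$, so it proves a slightly stronger statement. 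The only hypotheses you use implicitly are $1+\delta_i>0$ (or $\ge 0$, with the degenerate case trivial) and $B>0$, both of which hold in the calling context of Theorem~\ref{thm:approx}, where $\delta_i$ is a proportional change of a nonnegative valuation and the weights are the $b_{i'}\ge 1$. What the paper's longer argument buys is self-containedness at a completely elementary level (no appeal to convexity machinery); what yours buys is brevity, transparency about why the bound is tight exactly when all $\delta_i$ coincide (the equality case of Jensen, matching the tightness example in Theorem~\ref{thm:approx}), and mild extra generality.
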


Using this lemma we can now prove tight bounds for the approximation factor of the Partial Allocation
mechanism. As we show in this proof, the approximation factor 
depends directly 
on the relative weights of the players.
For simplicity in expressing the approximation factor, let $b_{min}$
denote
the smallest value of $b_i$ across all bidders of an instance and 
let
$\bar{B}=\left(\sum_{i\in N}b_i \right)-b_{min}$
be the sum of the $b_i$ values of all the other bidders. Finally, let $\psi = \bar{B}/b_{min}$ denote
the ratio of these two values.

\begin{theorem}\label{thm:approx}
The approximation factor of the Partial Allocation mechanism for the class of problem 
instances of some given $\psi$ value is exactly
\begin{equation*}
\left(1 + \frac{1}{\psi}\right)^{-\psi}.
\end{equation*}
\end{theorem}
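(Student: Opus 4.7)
The plan is to analyze $f_i$ for an arbitrary player $i$ directly from the defining formula~\eqref{eq:fraction}, using the PF optimality condition to bound how much the other players gain when $i$ is removed, and then invoke Lemma~\ref{lem:bound}. Since each player $i$ receives value exactly $f_i \cdot v_i(x^*)$, proving the approximation guarantee reduces to lower bounding $f_i$.

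First, I would introduce the ``gain ratios'' $\delta_{i'}=\bigl(v_{i'}(x^*_{-i})-v_{i'}(x^*)\bigr)/v_{i'}(x^*)$ for each $i'\neq i$, so that $f_i = \prod_{i'\neq i}(1+\delta_{i'})^{-b_{i'}/b_i}$. The crucial observation is that the PF defining inequality~\eqref{ineq:PF}, applied to $x^*$ against the alternative allocation that gives $x^*_{-i}$ to the other players and nothing to player $i$, produces
\begin{equation*}
-b_i + \sum_{i'\neq i} b_{i'}\delta_{i'} \leq 0,
\qquad\text{i.e.,}\qquad
\sum_{i'\neq i} b_{i'}\delta_{i'} \leq b_i.
\end{equation*}
This is exactly the hypothesis required by Lemma~\ref{lem:bound} with $\beta_{i'}=b_{i'}\geq 1$ and $b=b_i$; applying it and then taking the $(1/b_i)$-th root of the reciprocal yields
\begin{equation*}
f_i \;\geq\; \Bigl(1 + \tfrac{b_i}{B_i}\Bigr)^{-B_i/b_i} = \Bigl(1+\tfrac{1}{\psi_i}\Bigr)^{-\psi_i},
\end{equation*}
where $B_i = \sum_{i'\neq i}b_{i'}$ and $\psi_i=B_i/b_i$. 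Since the function $\psi\mapsto (1+1/\psi)^{-\psi}$ is strictly decreasing in $\psi$ (it is the reciprocal of a function increasing to $e$), the worst player is the one of smallest weight, for which $\psi_i=\bar B/b_{min}=\psi$. This gives the claimed lower bound on the approximation factor over the entire class.

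To show the bound is tight (the ``exactly'' claim), I would exhibit an instance where it is attained. Take a single item and let the player of weight $b_{min}$ play the role of $i$; the remaining players have weights summing to $\bar B$. With additive linear valuations (everyone values the item equally), the PF allocation gives player $i'$ a fraction $b_{i'}/B$ of the item, while $x^*_{-i}$ gives $b_{i'}/\bar B$, so every $\delta_{i'}$ equals $b_{min}/\bar B$ simultaneously. A direct computation then gives $f_i=(1+1/\psi)^{-\psi}$, matching the lower bound. (The construction makes both the PF condition and Lemma~\ref{lem:bound} tight, which is why equality is achieved.)

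The main obstacle is verifying the PF optimality step: one must be careful that the PF inequality~\eqref{ineq:PF} can be applied with an allocation that zeroes out player $i$, despite the $v_i(x^*)$ in the denominator. This is handled by noting that homogeneity of degree one lets us continuously scale $x^*_{-i}$ toward this limit, and the inequality is preserved in the limit; alternatively, one can apply~\eqref{ineq:PF} to $x' = (1-\epsilon)x^* + \epsilon x^*_{-i}$ (with $x^*_{-i}$ extended by zero for $i$), expand to first order in $\epsilon$, and let $\epsilon\to 0^+$. Once that inequality is in hand, the rest of the argument is a clean application of Lemma~\ref{lem:bound} plus the monotonicity of $(1+1/\psi)^{-\psi}$.
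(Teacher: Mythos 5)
Your proof is correct and takes essentially the same route as the paper's: the same gain-ratio substitution into Equation~\eqref{eq:fraction}, the same application of the PF inequality~\eqref{ineq:PF} with the allocation $x^*_{-i}$ (using $v_i(x^*_{-i})=0$) to obtain $\sum_{i'\neq i}b_{i'}\delta_{i'}\leq b_i$, the same invocation of Lemma~\ref{lem:bound}, and the same single-item tight instance. The only differences are cosmetic: your limiting/perturbation argument is unnecessary, since the denominators in~\eqref{ineq:PF} are the PF values $v_{i'}(x^*)$ rather than values at $x'$, so the inequality applies directly (the $i$-term is simply $-b_i$), while your explicit monotonicity step for $\psi\mapsto(1+1/\psi)^{-\psi}$ just makes precise a point the paper leaves implicit.
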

\begin{proof}
The PA mechanism allocates to each player $i$ a fraction $f_i$ of her PF allocation, and for
the class of homogeneous valuation functions of degree one this means that the final valuation of player
$i$ will be $v_i(x) = f_i \cdot v_i(x^*)$.
The approximation factor guaranteed by the mechanism is therefore equal to $\min_i\{f_i\}$.
Without loss of generality, let player $i$ be the one with the minimum value of $f_i$.
In the PF allocation $x^*_{-i}$ that the PA mechanism computes after removing player
$i$, every other player $i'$ experiences a value of $v_{i'}(x^*_{-i})$. Let $d_{i'}$
denote the proportional change between the valuation of player $i'$ for allocation $x^*$
and allocation $x^*_{-i}$, i.e.
\[ v_{i'}(x^*_{-i}) = (1+d_{i'}) v_{i'}(x^*).\]
Substituting for $v_{i'}(x^*_{-i})$ in Equation~\eqref{eq:fraction} yields:
\begin{equation}\label{eq:frac_prop}
f_i = \left(\frac{1}{\prod_{i'\neq i} (1+d_{i'})^{b_{i'}}}\right)^{1/b_i}.
\end{equation}
Since $x^*$ is a PF allocation, Inequality~\eqref{ineq:PF} implies that
\begin{align}
\sum_{i'\in N}{\frac{b_{i'}[v_{i'}(x^*_{-i})-v_{i'}(x^*)]}{v_{i'}(x^*)}} ~&\leq~ 0 ~~ \Longleftrightarrow \notag \\
\sum_{i'\neq i} b_{i'}d_{i'} ~+~  \frac{b_i[v_i(x^*_{-i})-v_i(x^*)]}{v_i(x^*)} ~&\leq~ 0 ~~ \Longleftrightarrow \notag \\
\sum_{i'\neq i} b_{i'}d_{i'} ~&\leq~ b_i. \label{eq:PF_implication}
\end{align}
The last equivalence holds due to the fact that $v_i(x^*_{-i})=0$, since allocation $x^*_{-i}$ clearly assigns
nothing to player $i$.

Let $B_{-i}=\sum_{i'\neq i}b_{i'}$;
using Inequality~\eqref{eq:PF_implication}
and Lemma~\ref{lem:bound}
(on substituting $b_i$ for $b$, $d_{i'}$ for $\delta_i$, $b_{i'}$ for $\beta_i$, and $B_{-i}$ for $B$), 
it follows from Equation~\eqref{eq:frac_prop} that
\begin{equation}\label{ineq:lb}
f_i \geq \left(1 + \frac{b_i}{B_{-i}}\right)^{-\frac{B_{-i}}{b_i}}.
\end{equation}

To verify that this bound is tight, consider any instance with just one item and 
the given $\psi$ value. The PF solution dictates that each player should
be receiving a fraction of the item proportional to the player's $b_i$ value.
The removal of a player $i$ therefore leads to a proportional increase of exactly
$b_i/B_{-i}$ for each of the other players' PF valuation. The PA mechanism therefore assigns to every
player $i$ a fraction of her PF allocation which is equal to the right hand side of
Inequality~\eqref{ineq:lb}. The player with the smallest $b_i$ value receives the
smallest fraction.
\end{proof}

The approximation factor of Theorem~\ref{thm:approx} implies that $f_i\geq 1/2$
for instances with two players having equal $b_i$ values, and $f_i \geq 1/e$
even when $\psi$ goes to infinity; we therefore get the following corollary.

\begin{corollary}\label{cor:approx}
The Partial Allocation mechanism always yields an allocation $x$
such that for every participating player $i$
\[ v_i(x) ~ \geq ~ \frac{1}{e} \cdot v_i(x^*). \]
\end{corollary}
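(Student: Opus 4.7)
The plan is to deduce the corollary immediately from Theorem~\ref{thm:approx}, which identifies the approximation factor for the class of instances with parameter $\psi$ as exactly $(1+1/\psi)^{-\psi}$. All that remains is to establish a uniform lower bound of $1/e$ on this quantity over all admissible values of $\psi > 0$.

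To get this uniform bound I would analyze the auxiliary quantity $g(\psi) := \psi \ln(1+1/\psi)$. Applying the elementary inequality $\ln(1+x)\leq x$, valid for $x > -1$, with $x = 1/\psi$ yields $g(\psi)\leq 1$, equivalently $(1+1/\psi)^{\psi}\leq e$, and taking reciprocals produces $(1+1/\psi)^{-\psi}\geq 1/e$. Plugging this into Theorem~\ref{thm:approx} then gives $v_i(x) \geq (1/e)\cdot v_i(x^*)$ for every bidder $i$ in every instance.

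I do not foresee any real obstacle here: the corollary is essentially a reformulation of Theorem~\ref{thm:approx} using the classical fact that $(1+1/\psi)^\psi$ is increasing in $\psi$ and converges to $e$ from below as $\psi \to \infty$. The illustrative endpoints highlighted in the surrounding text (namely $f_i \geq 1/2$ at $\psi = 1$ and $f_i \to 1/e$ as $\psi \to \infty$) confirm that $1/e$ is indeed the tight limiting value, so the bound is sharp in the large-$\psi$ regime. A minor bookkeeping point is the degenerate case $\psi = 0$ (a single bidder), where $f_i = 1$ trivially and the bound holds with room to spare.
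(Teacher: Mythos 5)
Your proposal is correct and follows the same route as the paper: the corollary is read off directly from Theorem~\ref{thm:approx} together with the fact that $(1+1/\psi)^{-\psi}$ is bounded below by its limiting value $1/e$ as $\psi\to\infty$. Your use of $\ln(1+x)\leq x$ simply makes explicit the monotonicity/limit fact the paper asserts without proof.
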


To complement this approximation factor, we now provide a negative result showing
that, even for the special case of additive linear valuations, no truthful mechanism
can guarantee an approximation factor better than $\frac{n+1}{2n}$ for problem instances
with $n$ players.
\begin{theorem}\label{thm:LB1}
There is no truthful mechanism that can guarantee an approximation factor greater than
$\frac{n+1}{2n} + \eps$ for any constant $\eps >0$ for all $n$-player problem instances, even if the valuations
are restricted to being additive linear.
\end{theorem}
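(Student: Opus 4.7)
The strategy I would pursue is to exhibit, for any truthful mechanism~$M$ with approximation factor~$\alpha$, a family of carefully chosen $n$-player additive-linear instances on which truthfulness combined with the $\alpha$-approximation and feasibility constraints is infeasible unless $\alpha \le \frac{n+1}{2n} + \eps$.

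First, I would fix the reports of $n-1$ of the bidders (say bidders $2,\dots,n$) to be a simple identical valuation concentrated on a single item. The mapping from bidder~1's report to her allocation under~$M$ then defines a ``menu'' $A$ of possible bundles, in the sense that truthfulness forces bidder~1's bundle at each report to be a best response in~$A$ to her reported valuation. A natural parameterization is to let bidder~1 report $v_1=(t,1)$ for $t \in [0,\infty)$; one can then compute bidder~1's PF value and each other bidder's PF value as explicit functions of~$t$. In the regime where $t$ is small, bidder~1 competes with the other $n-1$ bidders for a ``shared'' item; in the regime where $t$ is large, she gets a ``private'' item alone.

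Next, at each report~$t$, the approximation guarantee yields two constraints: (i) bidder~1's value from her chosen bundle is at least $\alpha$ times her PF value at~$t$, and (ii) each other bidder's value is at least $\alpha$ times her PF value at~$t$. Constraint~(ii), together with feasibility $\sum_i x_{ij}\le 1$, gives an upper bound on bidder~1's share of each ``contested'' item. Meanwhile, truthfulness enforces cyclic monotonicity: $t(x^t - x^{t'}) \le y^{t'} - y^t \le t'(x^t - x^{t'})$ for $t \le t'$, where $(x^t, y^t)$ is bidder~1's chosen bundle under report~$(t,1)$. In particular, $x^t$ is non-decreasing and $y^t$ non-increasing in~$t$, with a quantitative rate of tradeoff.

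I would then combine these constraints into a linear program with variables for $(x^t, y^t)$ at the key values of~$t$ (the extremes $t=0$ and $t\to\infty$, plus an intermediate transition point~$t^*$) and show that it becomes infeasible once $\alpha > \frac{n+1}{2n}$. The critical inequality is expected to come from a balance at the intermediate $t^*$, where the $n-1$ ``other'' bidders collectively demand an $\alpha$-fraction of the shared item while bidder~1's approximation simultaneously requires her item-1 share to grow; adding in the truthfulness linkage between $y^0$ and $(x^{t^*}, y^{t^*})$ yields an inequality that pins $\alpha$ to at most $\frac{n+1}{2n}$, with the optimum attained at a specific~$t^*$ depending on~$n$.

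The main obstacle I expect is twofold. First, finding the construction whose LP bound is exactly $\frac{n+1}{2n}$: for $n=2$ a 2-item, 2-bidder construction with the second bidder valuing only one item already gives the $3/4$ bound via an optimal two-point menu, and for larger~$n$ the bound tightens toward $1/2$, which strongly suggests the right construction involves $n-1$ bidders symmetrically contesting a shared item. Second, ruling out that a richer (multi-point or continuous) menu could do strictly better: one must argue, via a convexification of~$A$ or via duality of the LP, that the two-point-menu bound is globally optimal among all truthful mechanisms, not merely among mechanisms of that form.
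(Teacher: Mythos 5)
There is a genuine gap, and it lies exactly where you flag your two ``obstacles.'' Your proposed construction --- two items, $n-1$ bidders valuing only a shared item, bidder~1 reporting $(t,1)$ --- cannot certify the bound, and your $n=2$ anchor claim is false: on precisely that family there is a truthful mechanism guaranteeing both bidders a $2(\sqrt{2}-1)\approx 0.828>3/4$ fraction of their PF values. Concretely, fix bidder~2's report at $(0,1)$; for a report $(t,1)$ with $t<1$ give bidder~1 a $t$ fraction of item~1 and a $\frac{1}{2}-\frac{t^2}{2}$ fraction of item~2 (bidder~2 gets the rest of item~2), and for $t\ge 1$ give bidder~1 all of item~1 and bidder~2 all of item~2. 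The quantity $t\bar{t}+\frac{1}{2}-\frac{\bar{t}^2}{2}$ is maximized at $\bar{t}=t$, so this is truthful, and both bidders' ratios equal $(1+t^2)/(1+t)\ge 2(\sqrt{2}-1)$. Hence no LP built from menus over this two-item family can become infeasible at $\alpha=3/4$, and the analogous mechanisms make it doubtful that any fixed two-item family pins $\alpha$ at $(n+1)/(2n)$ for larger $n$. The hardness genuinely requires instances with an unbounded number of items.

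The paper's proof uses a device your plan lacks. It takes $m=(k+1)n$ items with $k>2/\epsilon$ and $n+1$ instances: in instance $i\le n$ each bidder $i'\ne i$ values item $i'$ at $kn+1$ and everything else at $1$, while bidder $i$ values everything at $1$; instance $n+1$ is the symmetric instance where each bidder $i$ values item $i$ at $kn+1$. The $(\rho+\epsilon)$-approximation in instance $i$ forces the mechanism to hand bidder $i$ roughly a $\rho$ fraction of item $i$ plus a $\rho$ fraction of the $kn$ filler items. Since each bidder of instance $n+1$ can unilaterally deviate to the all-ones report and turn it into instance $i$, truthfulness forces every bidder of instance $n+1$ to receive value at least about $\rho\cdot 2kn=kn+k+\epsilon kn$, which exceeds her PF value $kn+k+1$; having all $n$ bidders simultaneously strictly beat their PF values contradicts Pareto efficiency of the PF allocation. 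Note how this sidesteps your second obstacle entirely: no characterization of the full menu or convexification is needed, only one deviation per bidder plus Pareto efficiency, whereas in your plan both the construction achieving exactly $(n+1)/(2n)$ and the optimality of the two-point menu are left unproven --- and the first of these cannot be repaired within a two-item construction.
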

\begin{proof} 
For an arbitrary real value of $n > 1$, let $\rho =\frac{n+1}{2n}$, and assume that $Q$ is a truthful
resource allocation mechanism that guarantees a $(\rho+ \epsilon)$ approximation for all
$n$-player problem instances, where $\epsilon$ is a positive constant. This mechanism
receives as input the bidders' valuations and it returns a valid (fractional) allocation of the items.
We will define $n+1$ different input instances for this mechanism, each of which
will consist of $n$ bidders and $m=(k+1)n$ items, where $k>\frac{2}{\epsilon}$ will take very large values.
In order to prove the theorem, we will then show that $Q$ cannot simultaneously achieve
this approximation guarantee for all these instances, leading to a contradiction. For simplicity we
will refer to each bidder with a number from 1 to $n$, to each item with a
number from 1 to $(k+1)n$, and to each problem instance with a number from 1 to $n+1$.

We start by defining the first $n$ problem instances. For $i\leq n$, let 
problem instance
$i$ be as follows: Every bidder $i'\neq i$ has a valuation of $kn+1$ for item $i'$ and a
valuation of 1 for every other item; bidder $i$ has a valuation of 1 for all items.
In other words, all bidders except bidder $i$ have a strong preference for just one item,
which is different for each one of them.
The PF allocation for such additive linear valuations dictates that every bidder $i'\neq i$ is
allocated only item $i'$, while bidder $i$ is allocated all the remaining $kn+1$ items.
Since $Q$ achieves a $\rho+\epsilon$ approximation for this instance, it needs to provide
bidder $i$ with an allocation which the bidder values at least 
at
$\left(\rho+\epsilon\right)(kn+1)$.
In order to achieve this, mechanism $Q$ can assign to this bidder fractions of the set $M_{-i}$ of the
$n-1$ items that the PF solution allocates to the other bidders as well as fractions of the set $M_i$
of the $kn+1$ items that the PF allocation allocates to bidder $i$. Even if all of the $n-1$ items of
$M_{-i}$ were fully allocated to bidder $i$, the mechanism would still need to assign to this bidder
an allocation of value at least $\left(\rho+\epsilon\right)(kn+1)-(n-1)$ using items 
from
$M_i$. Since
$k>\frac{2}{\epsilon}$, 
$n-1<\frac{\epsilon}{2}(kn+1)$, and therefore mechanism $Q$ will need to allocate to bidder
$i$ a fractional assignment of items in $M_i$ that the bidder values at least
at
$\left(\rho +\frac{\epsilon}{2}\right)(kn+1)$. This implies that there must exist at least one item
in $M_i$ of which bidder $i$ is allocated a fraction of size at least $\left(\rho +\frac{\epsilon}{2}\right)$.
Since all the items in $M_i$ are identical and the numbering of the items is arbitrary, we can, without
loss of generality, assume that this item is item $i$. We have therefore shown that, for every instance
$i\leq n$ mechanism $Q$ will have to assign to bidder $i$ at least $\left(\rho +\frac{\epsilon}{2}\right)$
of item $i$, and an allocation of items in $M_i$ that guarantees her a valuation of at least
$\left(\rho +\frac{\epsilon}{2}\right)(kn+1)$.

We now define problem instance $n+1$, in which every bidder $i$ has a valuation of $kn+1$ for item $i$
and a valuation of 1 for all other items. The PF solution for this instance would allocate to each
bidder $i$ all of item $i$, as well as $k$ items from the set $\left\{n+1,...,(k+1)n\right\}$ (or more generally,
fractions of these items that add up to $k$). Clearly, every bidder $i$ can unilaterally misreport her valuation
leading to problem instance $i$ instead of this instance;
so, in order to maintain truthfulness, mechanism $Q$ will
have to provide every bidder $i$ of problem instance $n+1$ with at least the value that such a
deviation would provide her with. One can quickly verify that, even if mechanism $Q$ when faced with problem
instance $i$ provided bidder $i$ with no more than a $\left(\rho +\frac{\epsilon}{2}\right)$ fraction of item $i$,
still such a deviation would provide bidder $i$ with a valuation of at least
\[\left(\rho +\frac{\epsilon}{2}\right)(kn+1) ~+~ \left(\rho +\frac{\epsilon}{2}\right)kn ~\geq~\left(\rho +\frac{\epsilon}{2}\right)2kn.\]
The first term of the left hand side comes from the fraction of item $i$ that the bidder receives
and the second term comes from the average fraction of the remaining items. If we substitute 
$\rho =\frac{n+1}{2n}$,
we get that the truthfulness of $Q$ implies that every bidder $i$ of problem instance $n+1$ will have to receive an allocation of value at least
\[\left(\frac{n+1}{2n} +\frac{\epsilon}{2}\right)2kn ~=~ kn+k+\epsilon kn.\]
For any given constant value of $\epsilon$ though, since $k>\frac{2}{\epsilon}$ and $n>1$, every bidder will need to be assigned
an allocation that she values 
at
more than $kn+k+2$, which is greater than the valuation of $kn+k+1$ that
the player receives in the PF solution. This is obviously a contradiction since
the PF solution is Pareto efficient and there cannot exist any other allocation 
for
which all
bidders receive a strictly greater valuation.
\end{proof}

Theorem~\ref{thm:LB1} implies that, even if all the players have equal $b_i$ values, no truthful mechanism can
guarantee 
a
greater than $3/4$ approximation even for instances with just two bidders, and this bound drops further as the
number of bidders increases, finally converging to $1/2$. To complement the statement of Corollary~\ref{cor:approx},
we therefore get the following corollary.
\begin{corollary}\label{cor:inapprox}
No truthful mechanism can guarantee that it will always yield an allocation $x$
such that for any $\eps > 0$ and for every participating player $i$
\[ v_i(x) ~ \geq ~ \left( \frac{1}{2} + \eps \right) \cdot v_i(x^*). \]
\end{corollary}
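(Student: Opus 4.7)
The plan is simply to derive Corollary~\ref{cor:inapprox} as a direct limiting consequence of Theorem~\ref{thm:LB1}, by driving the number of bidders $n$ to infinity. The theorem already gives, for every fixed $n$ and every positive $\eps'$, an explicit upper bound of $\tfrac{n+1}{2n} + \eps'$ on the best approximation ratio achievable by any truthful mechanism over $n$-player additive linear instances. Since $\tfrac{n+1}{2n} = \tfrac12 + \tfrac{1}{2n} \to \tfrac12$ as $n \to \infty$, we can make this bound arbitrarily close to $\tfrac12$.

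Concretely, fix an arbitrary $\eps > 0$ and suppose for contradiction that there exists a truthful mechanism $Q$ that guarantees every player a valuation of at least $(\tfrac12 + \eps) \cdot v_i(x^*)$ on every instance. Choose an integer $n$ with $n > 1/\eps$, so that $\tfrac{1}{2n} < \tfrac{\eps}{2}$, and set $\eps' := \tfrac{\eps}{2}$. Then
\begin{equation*}
\frac{n+1}{2n} + \eps' \;=\; \frac{1}{2} + \frac{1}{2n} + \frac{\eps}{2} \;<\; \frac{1}{2} + \eps.
\end{equation*}
In particular, $Q$ would then be an $n$-player truthful mechanism with approximation ratio strictly greater than $\tfrac{n+1}{2n} + \eps'$ on all $n$-player additive linear instances, contradicting Theorem~\ref{thm:LB1}.

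There is essentially no obstacle: the hard work has already been done in the construction of the $n{+}1$ instances used in the proof of Theorem~\ref{thm:LB1}, and the only thing to verify in the corollary is the elementary inequality above relating $\tfrac{n+1}{2n}$ and $\tfrac12$. The only mildly delicate point is keeping the quantifiers straight: the corollary's $\eps$ is chosen first, and both $n$ and the theorem's $\eps'$ are chosen in terms of it, so that the theorem can be invoked with constants that remain fixed while yielding a bound sharper than $\tfrac12 + \eps$.
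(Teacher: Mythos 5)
Your proof is correct and follows essentially the same route as the paper, which obtains the corollary directly from Theorem~\ref{thm:LB1} by letting $n$ grow so that the bound $\frac{n+1}{2n}=\frac12+\frac{1}{2n}$ converges to $\frac12$. Your explicit choice of $n>1/\eps$ and $\eps'=\eps/2$ just makes the limiting/contradiction argument that the paper leaves implicit fully precise, with the quantifiers handled correctly.
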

\subsection{Envy-Freeness}
We now consider the question of whether the outcomes that the Partial Allocation mechanism yields are envy-free; we show
that, for two well studied types of valuation functions this is indeed the case, thus providing further
evidence of the fairness properties of this mechanism. We start by showing that, if the bidders 
have additive linear valuations, then the outcome that the PA mechanism outputs is also envy-free.
\begin{theorem}\label{thm:addef}
The PA mechanism is envy-free for additive linear valuations.
\end{theorem}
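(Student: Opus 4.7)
The plan is to reduce envy-freeness to a single inequality on the Eisenberg–Gale objective and then verify that inequality with an allocation construction. Writing $G(S) := \max_{x \in \mathcal F}\prod_{j \in S} v_j(x_j)^{b_j}$, a short unwinding of the formula defining $f_i$ in the mechanism gives the clean identity
\[
(f_i\,v_i(x^*_i))^{b_i} \;=\; \frac{G(N)}{G(N\setminus\{i\})}.
\]
For additive-linear valuations, $v_i(f_k x^*_k) = f_k v_i(x^*_k)$, so the envy-freeness condition $v_i(f_i x^*_i) \geq v_i(f_k x^*_k)$ reduces (in the equal-weights setting) to
\[
v_k(x^*_k)\,G(N\setminus\{k\}) \;\geq\; v_i(x^*_k)\,G(N\setminus\{i\}).
\]

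To verify this inequality, my plan is to construct, for the $(n-1)$-bidder instance obtained by deleting bidder $k$, the feasible allocation in which bidder $i$ receives the bundle $x^*_{-i,k}$ (bidder $k$'s bundle in the PF allocation of the instance that instead deletes bidder $i$), and every other bidder $i' \neq i, k$ keeps her bundle $x^*_{-i,i'}$. Feasibility is immediate because the total consumption per item matches $x^*_{-i}$, and additive linearity then yields
\[
G(N\setminus\{k\}) \;\geq\; \frac{v_i(x^*_{-i,k})}{v_k(x^*_{-i,k})}\,G(N\setminus\{i\}),
\]
which reduces the target inequality to the ratio comparison $v_i(x^*_{-i,k})/v_k(x^*_{-i,k}) \geq v_i(x^*_k)/v_k(x^*_k)$.

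The hard part is establishing this last ratio comparison. My plan is to use the CEEI characterization of the PF allocation for additive-linear valuations: the difference between $x^*_{-i,k}$ and $x^*_k$ is made up of items freed by removing bidder $i$, which are precisely items valued by bidder $i$ at her maximum bang-per-buck $\alpha_i$ under the original equilibrium prices; this structural fact forces the ratio $v_i/v_k$ on bidder $k$'s bundle to move in the right direction when $i$ is removed. If the direct CEEI-monotonicity argument proves delicate, a fallback is to combine the Step~2 construction with the bundle-swap inequality $v_i(x^*_i)v_k(x^*_k) \geq v_i(x^*_k)v_k(x^*_i)$, which follows from PF optimality applied to the allocation that swaps bidders $i$ and $k$'s bundles in $x^*$. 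The Leontief case in the next theorem demands an analogous but separate construction, since the combined-bundle step must accommodate min-valuations rather than sums.
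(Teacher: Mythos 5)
Your Step 1 reduction is exactly the inequality the paper works with, and the idea of lower-bounding $G(N\setminus\{k\})$ by exhibiting a feasible allocation built out of $x^*_{-i}$ is also the paper's strategy. The gap is at the step you yourself flag as the hard part: the ratio comparison $v_i(x^*_{-i,k})/v_k(x^*_{-i,k}) \geq v_i(x^*_k)/v_k(x^*_k)$ is false for additive linear valuations, so the construction that simply hands bidder $i$ the bundle $x^*_{-i,k}$ does not suffice. Concretely, take items $A,B,C$ and unit-weight bidders $i=(1,\epsilon,0)$, $k=(0,1,1)$, $l=(1,0,1)$, $m=(0,1,0)$ with small $\epsilon>0$. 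The CEEI prices are $(4/3,4/3,4/3)$ and the (unique) PF allocation gives $i$ three quarters of $A$; $l$ a quarter of $A$ and half of $C$; $k$ a quarter of $B$ and half of $C$; $m$ three quarters of $B$, so $v_i(x^*_k)=\epsilon/4>0$ and $v_k(x^*_k)=3/4$. After removing $i$, the unique PF allocation gives all of $A$ to $l$, all of $B$ to $m$, and all of $C$ to $k$, so $v_i(x^*_{-i,k})=0$ while $v_k(x^*_{-i,k})=1$: the ratio drops from $\epsilon/3$ to $0$ and your Step 2 bound degenerates to $G(N\setminus\{k\})\geq 0$. This also refutes the CEEI intuition you rely on: removing $i$ triggers a cascade ($l$ moves her spending from $C$ to the freed item $A$, $m$ absorbs $B$), so $k$'s new bundle consists of items $i$ never valued and has lost the portions $i$ did value. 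The fallback swap inequality $v_i(x^*_i)v_k(x^*_k)\geq v_i(x^*_k)v_k(x^*_i)$ is true but concerns only the original bundles and cannot repair a bound that has become vacuous.

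Ironically, the single-bundle-transfer construction you propose is precisely the one the paper uses for the \emph{Leontief} case, where it works because $x^*_{-i,k}$ is a scalar multiple of $x^*_k$ and the ratio comparison holds with equality; it is the additive linear case that needs the more elaborate argument. The paper's proof normalizes $v_{i'}(x^*)=1$, builds the directed transfer graph between $x^*$ and $x^*_{-i}$ (an edge from $j'$ to $k'$ weighted by $j'$'s value for portions of $k'$'s $x^*$-share that $x^*_{-i}$ gives to $j'$), argues it is w.l.o.g.\ acyclic, that $v_{i'}(x^*_{-i})\geq v_{i'}(x^*)$ for all $i'\neq i$, and that $i$ is the unique sink; it then cascades ``returns'' from the envied bidder down to $i$, obtaining an intermediate allocation $\bar{x}$ with $v_i(\bar{x})\geq v_k(x^*_{-i})-v_k(\bar{x})$, and finally transfers $k$'s residual original-share portions to $i$, using PF optimality of $x^*$ to bound $i$'s value for the part of $k$'s share held by others by $1-v_k(\bar{x})$; chaining these estimates yields $v_i(x_{-k})\geq v_k(x^*_{-i})\,v_i(x^*_k)$ and hence the product bound your Step 2 was meant to deliver. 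Some argument of this cascading kind (or another genuinely new idea) is needed to close your proof; as written, the plan breaks exactly at the ratio comparison.
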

\begin{proof}
Let $x^*$ denote the PF allocation including all the bidders, with each bidder's
valuations scaled so that $v_i(x^*)=1$. Let $v_i(x^*_j)$ denote the value of bidder
$i$ for $x^*_j$, the PF share of bidder $j$ in $x^*$, and let $x^*_{-i}$ denote the PF allocation that
arises after removing some bidder $i$. The PA mechanism allocates each (unweighted) bidder
$i$ a fraction $f_i$ of her PF share, where
\begin{equation*}
f_i ~=~ \frac{\prod_{k\neq i}{[v_k(x^*)]}}{\prod_{k\neq i}{[v_k(x^*_{-i})]}} ~=~ \frac{1}{\prod_{k\neq i}{[v_k(x^*_{-i})]}}.
\end{equation*}

In order to prove that the PA mechanism is envy-free, we need to show that for every bidder $i$,
and for all $j\neq i$, $f_iv_i(x^*) \geq f_jv_i(x^*_j)$, or equivalently
\begin{equation}\label{ineq:efgoal}
\frac{1}{\prod_{k\neq i}{[v_k(x^*_{-i})]}} ~\geq~ \frac{v_i(x^*_j)}{\prod_{k\neq j}{[v_k(x^*_{-j})]}} ~\Leftrightarrow ~
\prod_{k\neq j}{[v_k(x^*_{-j})]} ~\geq~ v_i(x^*_j)\prod_{k\neq i}{[v_k(x^*_{-i})]}.
\end{equation}
To prove the above inequality, we will modify allocation $x^*_{-i}$ so as to create an allocation $x_{-j}$ such that
\begin{equation}\label{ineq:ef}
\prod_{k\neq j}{[v_k(x_{-j})]} ~\geq~ v_i(x^*_j)\prod_{k\neq i}{[v_k(x^*_{-i})]}.
\end{equation}
Clearly, for any feasible allocation $x_{-j}$ it must be the case that
\begin{equation}\label{ineq:ef2}
\prod_{k\neq j}{[v_k(x^*_{-j})]} ~\geq~ \prod_{k\neq j}{[v_k(x_{-j})]},
\end{equation}
since $x^*_{-j}$ is, by definition, the feasible allocation that maximizes this product.
Therefore, combining Inequalities~\eqref{ineq:ef} and~\eqref{ineq:ef2} implies Inequality~\eqref{ineq:efgoal}.
It remains to construct an allocation $x_{-j}$ satisfying Inequality~\eqref{ineq:ef}.

To construct allocation $x_{-j}$, we use allocation $x^*_{-i}$ and we define the following weighted directed 
graph $G$ based on $x^*_{-i}$: the set of vertices corresponds to the set of bidders, and a directed edge from 
the vertex for bidder $j$ to the vertex for bidder $k$ exists if and only if $x^*_{-i}$ allocates to bidder $j$ portions of items that were 
instead allocated to bidder $k$ in $x^*$. The weight of such an edge is equal to the total value that bidder $j$ 
sees in all these portions.
Since the valuations of all bidders are scaled so that $v_j(x^*)=1$ for all $j$, this implies that, if the weight
of some edge $(j,k)$ is $v$ (w.r.t.\ these scaled valuations), then the total value of bidder $k$ for those same
portions that bidder $j$ values at $v$, is at least $v$. If that were not the case, then $x^*$ would not have allocated
those portions to bidder $k$; allocating them to bidder $j$ instead would lead to a positive aggregate proportional 
change to the valuations. This means that we can assume, without loss of generality, that the graph is a directed 
acyclic one; if not, we can rearrange the allocation so as to remove any directed cycles from this graph without 
decreasing any bidder's valuation. 

Also note that for every bidder $k\neq i$ it must be the case that $v_k(x^*_{-i})\geq v_k(x^*)$.
To verify this fact, assume that it is not true, and let $k$ be the bidder with the minimum value
$v_k(x^*_{-i})$. Since $v_k(x^*_{-i})<v_k(x^*)=1$, it must be the case that $x^*_{-i}$ does not allocate
to bidder $k$ all of her PF share according to $x^*$, thus the vertex for bidder $k$ has incoming edges 
of positive weight in the directed acyclic graph $G$, and it therefore belongs to some directed path.
The very first vertex of this path is a source of $G$ that corresponds to some bidder $s$; the fact that this vertex has
no incoming edges implies that $v_s(x^*_{-i})\geq v_s(x^*) = 1$. Since $v_k(x^*_{-i})<1$ we can deduce
that there exists some directed edge $(\alpha, \beta)$ along the path from $s$ to $k$ such that
$v_\alpha(x^*_{-i})>v_\beta(x^*_{-i})$. Returning some of the portions contributing to this edge from 
bidder $\alpha$ to bidder $\beta$ will lead to a positive aggregate proportional change to the valuations, 
contradicting that $x^*_{-i}$ is the PF allocation excluding bidder $i$.
Having shown that $v_k(x^*_{-i})\geq v_k(x^*)$ for every bidder $k$ other than $i$, we can now deduce that
the total weight of incoming edges for the vertex in $G$ corresponding to any bidder $k\neq i$ is no more than 
the total weight of the outgoing edges. Finally, this also implies that the only sink of $G$ will have to be the 
vertex for bidder $i$.

The first step of our construction starts from allocation $x^*_{-i}$ and it reallocates some of the $x^*_{-i}$ allocation,
leading to a new allocation $\bar{x}$. Using the directed subtree of $G$ rooted at the vertex of bidder $j$, we
reduce to zero the weights of the edges leaving $j$ by reducing the allocation at $j$, increasing the allocation
at $i$, and suitably changing the allocation of other bidders. More
specifically, we start by returning all the portions that bidder $j$ was allocated in $x^*_{-i}$ but not
in $x^*$, back to the bidders who were allocated these portions in $x^*$. These bidders to whom some portions
were returned then return portions of equal value that they too were allocated in $x^*_{-i}$ but not in $x^*$;
this is possible since, for each such bidder, the total incoming edge weight of its vertex is no more than the 
total outgoing edge weight. We repeat this process until the sink, the vertex for bidder $i$, is reached. 
One can quickly verify that 
\begin{equation}\label{ineq:xbar}
v_i(\bar{x})\geq v_j(x^*_{-i}) - v_j(\bar{x});
\end{equation}
in words, the value that bidder $i$ gained
in this transition from $x^*_{-i}$ to $\bar{x}$ is at least as large as the value that bidder $j$ lost in that
same transition. Finally, in allocation $\bar{x}$, whatever value $v_j(\bar{x})$ bidder $j$ is left with comes only
from portions that were part of her PF share in $x^*$. 

Bidder $j$'s total valuation for any portions of her PF share in $x^*$ that are allocated to other bidders in 
$x^*_{-i}$ is equal to $1 - v_j(\bar{x})$. Thus, bidder $i$'s valuation for those same portions will be at most 
$1 - v_j(\bar{x})$; otherwise modifying $x^*$ by allocating these portions to $i$ would lead to a positive aggregate 
change to the valuations. This means that for bidder $i$ the portions remaining with bidder $j$ in allocation $\bar{x}$ 
have value at least $v_i(x^*_j)-(1 - v_j(\bar{x}))$. 
We conclude the construction of allocation $x_{-j}$ by allocating all the remaining portions allocated to bidder $j$ in $\bar{x}$
to bidder $i$, leading to
\begin{align*}
v_i(x_{-j}) &\geq v_i(\bar{x}) + v_i(x^*_j)-(1 - v_j(\bar{x}))\\
&\geq v_j(x^*_{-i}) - v_j(\bar{x}) + v_i(x^*_j)-(1 - v_j(\bar{x}))\\
&\geq v_j(x^*_{-i})-1 + v_i(x^*_j)\\
&\geq [v_j(x^*_{-i})-1] v_i(x^*_j)+ v_i(x^*_j)\\
&= v_j(x^*_{-i})v_i(x^*_j).
\end{align*}
The second inequality is deduced by substituting from Inequality~\eqref{ineq:xbar}; the last inequality can be verified by
using the fact that $v_i(x^*_j)\leq 1$, and multiplying both sides of this inequality with the non-negative value 
$v_j(x^*_{-i}) - 1$, leading to $[v_j(x^*_{-i})-1] v_i(x^*_j) \leq v_j(x^*_{-i})-1$.
Also note that for all $k\notin \{i,j\}$, $v_k(x_{-j})=v_k(x^*_{-i})$. We therefore conclude that
Inequality~\eqref{ineq:ef} is true. 
%
\end{proof}
Following the same proof structure we can now also show that the PA mechanism is envy-free when the bidders
have Leontief valuations.
\begin{theorem}
The PA mechanism is envy-free for Leontief valuations.
\end{theorem}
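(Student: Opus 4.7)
The plan is to mirror the proof of Theorem~\ref{thm:addef} line for line, adapting only the pieces that use the additive structure. After scaling each bidder's valuations so that $v_i(x^*)=1$ for all $i$ (which is permitted since Leontief valuations are homogeneous of degree~one), envy-freeness again reduces to exhibiting, for every ordered pair $(i,j)$ with $i\neq j$, a feasible allocation $x_{-j}$ that omits bidder $j$ and satisfies
\[
\prod_{k\neq j} v_k(x_{-j}) \;\geq\; v_i(x^*_j)\prod_{k\neq i} v_k(x^*_{-i}).
\]
Combined with the maximality of $x^*_{-j}$ among feasible $j$-omitting allocations, this yields Inequality~\eqref{ineq:efgoal} in the Leontief setting.

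Before setting up the graph, I would exploit a simplification specific to Leontief preferences: in any PF allocation no bidder receives a strictly larger fraction of an item than is needed to match her Leontief ratio (any excess contributes nothing to her value and can be redistributed). Thus I can assume $x^*$ and $x^*_{-i}$ allocate items to bidders in exact proportions $a_i$. This makes the "value transferred" along an edge unambiguous: a bundle of items in bidder $j$'s Leontief ratios $a_j$ has a well-defined value for bidder $j$ equal to its Leontief scalar.

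Next I would construct the directed graph $G$ exactly as before: an edge from $j$ to $k$ whenever $x^*_{-i}$ allocates bidder $j$ portions that were allocated to bidder $k$ in $x^*$, with weight equal to $v_j$ of those portions. The three structural claims carry over with minor modifications: (a) $G$ can be taken acyclic, because any directed cycle admits a Leontief-preserving rotation of consumption that strictly increases $\prod_k v_k$, contradicting PF optimality of $x^*_{-i}$; (b) $v_k(x^*_{-i})\geq v_k(x^*)=1$ for every $k\neq i$, via the same source-to-deficit-sink path argument, since along any edge $(\alpha,\beta)$ on such a path shifting a small $a_\beta$-shaped bundle from $\alpha$ back to $\beta$ raises $v_\beta$ faster than it lowers $v_\alpha$ when $v_\alpha>v_\beta$; and (c) the unique sink of $G$ must be bidder $i$, since her vertex alone can absorb net inflow.

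The construction of $\bar x$ and then $x_{-j}$ proceeds by routing, along the subtree of $G$ rooted at $j$, bundles that travel in each bidder's own Leontief proportions (this is possible because each internal vertex has at least as much outgoing weight as incoming weight, and the bundles can be converted into the receiving bidder's ratios at each step without wasting value). This guarantees the analogue of Inequality~\eqref{ineq:xbar}, $v_i(\bar x)\geq v_j(x^*_{-i})-v_j(\bar x)$. Reallocating bidder $j$'s remaining $\bar x$-portions to bidder $i$, and using the PF optimality of $x^*$ to bound bidder $i$'s value for those portions by $1-v_j(\bar x)$, yields $v_i(x_{-j})\geq v_j(x^*_{-i})v_i(x^*_j)$ exactly as in the additive case, while $v_k(x_{-j})=v_k(x^*_{-i})$ for $k\notin\{i,j\}$, completing the proof.

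The main obstacle is step~(b) together with the routing construction: for additive valuations, moving a portion of value~$v$ between bidders has a clean $\pm v$ effect on both bidders' values, but under Leontief preferences the marginal value of a single item depends on whether it is currently the bottleneck. Restricting all transfers along $G$ to \emph{$a$-shaped} bundles (items delivered in the recipient's exact Leontief ratios) is the key device that converts the additive argument into a Leontief one, and establishing that such restricted transfers still suffice to discharge all required flow is the technical heart of the proof.
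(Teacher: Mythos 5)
Your plan to replicate the graph-and-routing argument of Theorem~\ref{thm:addef} breaks down precisely where additivity is used, and the fixes you sketch are not established. Two steps do not carry over: (i) the additive proof crucially decomposes bidder $j$'s unit value as $v_j(\bar x)$ plus her value for the portions of her $x^*$-share held by others, i.e.\ the quantity $1-v_j(\bar x)$, and then bounds bidder $i$'s value for those portions by the same quantity; under Leontief valuations the value of a union of disjoint sub-bundles is not the sum of their values, and neither the retained nor the given-away part of $j$'s share need be in $j$'s ratios, so this decomposition has no analogue; (ii) the edge-weight facts (the recipient values transferred portions at least as much as the sender, inflow at most outflow at each vertex, acyclicity via value-preserving rearrangement, and your claim (b) that $v_k(x^*_{-i})\geq v_k(x^*)$) all rest on additive first-order reasoning about PF; for Leontief the recipient's value for arbitrary portions is bottleneck-limited and can be zero, and your proposed remedy---routing only bundles shaped in the recipient's ratios---is exactly the step you yourself call the ``technical heart'' but never prove; it is unclear it can be proved, since intermediate bidders hold fragments of other bidders' proportioned bundles that in general cannot be repackaged item-by-item into the next recipient's ratios.

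Ironically, the observation you make before building the graph is all that is needed, and it is how the paper argues: w.l.o.g.\ every bidder's bundle in a PF allocation is in her exact Leontief proportions, so bidder $j$'s bundle in $x^*_{-i}$ is $c$ times her bundle $x^*_j$ in $x^*$, where $c=v_j(x^*_{-i})$ by degree-one homogeneity (recall the scaling $v_j(x^*)=1$). Define $x_{-j}$ from $x^*_{-i}$ by simply handing that entire bundle to bidder $i$ and leaving every other bidder's allocation untouched; then $v_i(x_{-j})=c\cdot v_i(x^*_j)=v_j(x^*_{-i})\, v_i(x^*_j)$ and $v_k(x_{-j})=v_k(x^*_{-i})$ for $k\notin\{i,j\}$, which yields Inequality~\eqref{ineq:ef} (and hence~\eqref{ineq:efgoal}) directly, with no graph, no acyclicity, no monotonicity claim, and no routing. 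You should discard the flow construction and replace it with this two-line transfer.
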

\begin{proof}
Just as in the proof of Theorem~\ref{thm:addef}, let $x^*$ denote the PF allocation including all the 
bidders, with each bidder's valuations scaled so that $v_i(x^*)=1$. Also, let $v_i(x^*_j)$ denote the value of bidder
$i$ for $x^*_j$, the PF share of bidder $j$ in $x^*$, and let $x^*_{-i}$ denote the PF allocation that
arises after removing some bidder $i$.

Following the steps of the proof of Theorem~\ref{thm:addef} we can reduce the problem of showing that the PA mechanism 
is envy-free to constructing an allocation $x_{-j}$ that satisfies
Inequality~\eqref{ineq:ef}, i.e.\ such that
\begin{equation*}
\prod_{k\neq j}{[v_k(x^*_{-j})]} ~\geq~ \prod_{k\neq j}{[v_k(x_{-j})]} ~\geq~ v_i(x^*_j)\prod_{k\neq i}{[v_k(x^*_{-i})]}.
\end{equation*}
To construct allocation $x_{-j}$, we start from allocation $x^*_{-i}$ and we reallocate the bundle of item fractions
allocated to bidder $j$ in $x^*_{-i}$ to bidder $i$ instead, while maintaining the same allocations for all other bidders.
Therefore, after simplifying the latter inequality using the fact that $v_k(x_{-j})=v_k(x^*_{-i})$ for all $k\neq i,j$, 
what we need to show is that 
\begin{equation}\label{ineq:leoef}
v_i(x_{-j}) ~\geq~ v_i(x^*_j) v_j(x^*_{-i}).
\end{equation}
Note that, given the structure of Leontief valuations, every bidder is interested in bundles of item fractions that satisfy
specific proportions. We can, without loss of generality, assume that the PF allocation allocates a fraction of some 
resource to a bidder only when this fraction leads to an increase of the bidder's valuation.
This means that the bundle of item fractions allocated to bidder $j$ in $x^*$ and the one allocated to
her in $x^*_{-i}$ both satisfy the same proportions; that is, there exists some constant $c$ such that, for each one of the
items, bidder $j$ receives in $x^*_{-i}$ exactly $c$ times the amount of that item that she receives in $x^*$. As a result, 
given the fact that Leontief valuations are homogeneous of degree one, $v_j(x^*_{-i})=c\cdot v_j(x^*) = c$ (using the fact that $v_j(x^*)=1$).
Similarly, since $x_{-j}$ allocates to bidder $i$ the bundle of bidder $j$ in $x^*_{-i}$, and using the homogeneous structure
of Leontief valuations, this implies that $v_i(x_{-j})=c\cdot v_i(x^*_j)$. Substituting these two equalities in 
Inequality~\eqref{ineq:leoef} verifies
that the inequality is true, thus concluding the proof.
\end{proof}

\subsection{Running Time and Robustness}

The PA mechanism has reduced the problem of truthfully implementing a constant factor approximation
of the PF allocation to computing 
exact PF allocations
for 
several
different problem instances,
as this is the only subroutine that the mechanism calls. If the valuation functions
of the players are affine, then there is a polynomial time algorithm to compute the
exact PF allocation~\cite{DevanurPSV08,JV07}.

We now show that, even if the PF solution can be only approximately computed in polynomial time,
our truthfulness and approximation related statements are robust with respect to such approximations
(all the proofs of this subsection are deferred to the Appendix).
More specifically, we assume that the PA mechanism uses a polynomial time algorithm that computes a
feasible allocation $\widetilde{x}$ instead of $x^*$ such that
\[
\left[ \prod_i [v_i(\widetilde{x})]^{b_i} \right]^{1/B} ~\geq~ 
  \left[ (1-\epsilon)  \prod_i [v_i(x^*)]^{b_i} \right]^{1/B}, ~~~  \text{where}~ B = \sum_{i=1}^n b_i.
  \]
Using this algorithm, the PA mechanism can be adapted as follows:
\LinesNumbered	
\begin{algorithm}
\SetEndCharOfAlgoLine{.}
Compute the approximate PF allocation $\widetilde{x}$ based on the reported bids\;
For each player $i$, compute the approximate PF allocation $\widetilde{x}_{-i}$ that would arise in her absence\;
Allocate to each player $i$ a fraction $\widetilde{f}_{i}$ of everything that she receives according to $\widetilde{x}$, where
 \begin{equation}\label{eq:approxfraction}
  \widetilde{f}_i = \min\left\{1 ~,~ \left(\frac{\prod_{i'\neq i}{[v_{i'}(\widetilde{x})]^{b_{i'}}}}{\prod_{i'\neq i}{[v_{i'}(\widetilde{x}_{-i})]^{b_{i'}}}}\right)^{1/b_i}\right\}.
 \end{equation}
	\caption{The Approximate Partial Allocation mechanism.}
\end{algorithm}

For this adapted version of the PA mechanism to remain feasible, we need to make sure that $\widetilde{f}_i$ remains less than or equal to 1.
Even if,
for some reason,
the allocation $\widetilde{x}_{-i}$ computed by the approximation algorithm does not 
satisfy this property,
the adapted mechanism 
will then choose
$\widetilde{f}_i=1$ instead.

We start by showing two lemmas verifying that this adapted version of the PA mechanism is robust both with respect to the approximation
factor it guarantees and with respect to the truthfulness guarantee.
\begin{lemma}\label{lem:adaptedapprox}
The approximation factor of the adapted PA mechanism for the class of problem 
instances of some given $\psi$ value is at least
\[ (1-\epsilon) \left(1 + \frac{1}{\psi}\right)^{-\psi}. \]
\end{lemma}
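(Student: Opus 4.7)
The plan is to reduce the bound for the adapted mechanism to the bound of Theorem~\ref{thm:approx} by tracking where the $(1-\epsilon)$ factor enters. First, using homogeneity of degree one and the definition of $\widetilde{f}_i$ in Equation~\eqref{eq:approxfraction}, I would rewrite the player's final valuation (when the minimum in~\eqref{eq:approxfraction} is not the cap at $1$) as
\[
\widetilde{f}_i \cdot v_i(\widetilde{x}) \;=\; \left(\frac{\prod_{i'}[v_{i'}(\widetilde{x})]^{b_{i'}}}{\prod_{i'\neq i}[v_{i'}(\widetilde{x}_{-i})]^{b_{i'}}}\right)^{\!1/b_i},
\]
which mirrors the analogous identity for the exact PA mechanism.

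Next I would bound the numerator and denominator separately. For the numerator, the approximation guarantee on the joint PF computation directly yields $\prod_{i'}[v_{i'}(\widetilde{x})]^{b_{i'}} \geq (1-\epsilon) \prod_{i'}[v_{i'}(x^*)]^{b_{i'}}$. For the denominator, the key point is that $\widetilde{x}_{-i}$ is itself a \emph{feasible} allocation for the $(n{-}1)$-player instance obtained by removing player $i$, and $x^*_{-i}$ is by definition the maximizer of the product of the other players' valuations over such feasible allocations; hence $\prod_{i'\neq i}[v_{i'}(\widetilde{x}_{-i})]^{b_{i'}} \leq \prod_{i'\neq i}[v_{i'}(x^*_{-i})]^{b_{i'}}$ (note that no $(1-\epsilon)$ slack is needed here, only the optimality of $x^*_{-i}$). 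Combining these two bounds gives
\[
\widetilde{f}_i \cdot v_i(\widetilde{x}) \;\geq\; (1-\epsilon)^{1/b_i}\,f_i\,v_i(x^*) \;\geq\; (1-\epsilon)\,f_i\,v_i(x^*),
\]
where the second inequality uses $b_i \geq 1$. Invoking Theorem~\ref{thm:approx} to bound $f_i \geq (1+1/\psi)^{-\psi}$ closes this case.

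It remains to handle the capped case, where $\widetilde{f}_i = 1$ because the uncapped fraction $\widetilde{f}_i^{\,u}$ from Equation~\eqref{eq:approxfraction} exceeds $1$. Here the final valuation is $v_i(\widetilde{x})$, and the inequality derived above still gives $\widetilde{f}_i^{\,u} \cdot v_i(\widetilde{x}) \geq (1-\epsilon)^{1/b_i} f_i v_i(x^*)$. Since $\widetilde{f}_i^{\,u} > 1$, this implies $v_i(\widetilde{x}) > (1-\epsilon)(1+1/\psi)^{-\psi}\, v_i(x^*)$, so the bound holds in both cases.

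The only subtle step is the denominator bound: one might worry that the approximation error should enter there as well. The point I would emphasize is that the $(1-\epsilon)$ slack is \emph{one-sided}, relating $\widetilde{x}$ to the optimum $x^*$; for the denominator we only need the direction $\widetilde{x}_{-i}$ no better than $x^*_{-i}$, which follows from optimality of $x^*_{-i}$ and requires no approximation accounting. This one-sided structure is exactly what keeps the degradation to a single multiplicative $(1-\epsilon)$ factor in the final bound.
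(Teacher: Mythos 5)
Your main-case argument is correct and follows essentially the same route as the paper's proof: rewrite $\widetilde{f}_i\cdot v_i(\widetilde{x})$ as $\bigl(\prod_{i'}[v_{i'}(\widetilde{x})]^{b_{i'}}/\prod_{i'\neq i}[v_{i'}(\widetilde{x}_{-i})]^{b_{i'}}\bigr)^{1/b_i}$, bound the numerator by $(1-\epsilon)\prod_{i'}[v_{i'}(x^*)]^{b_{i'}}$ via the approximation guarantee, bound the denominator by $\prod_{i'\neq i}[v_{i'}(x^*_{-i})]^{b_{i'}}$ via optimality of $x^*_{-i}$ over the feasible reduced instance, use $b_i\geq 1$ to absorb the exponent $1/b_i$, and finish with Theorem~\ref{thm:approx}. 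Your observation that the $(1-\epsilon)$ slack is one-sided is exactly the point the paper makes.

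However, your handling of the capped case has a genuine gap: from $\widetilde{f}_i^{\,u}\cdot v_i(\widetilde{x}) \geq (1-\epsilon)^{1/b_i} f_i\, v_i(x^*)$ and $\widetilde{f}_i^{\,u}>1$ you can only divide by $\widetilde{f}_i^{\,u}$ and obtain the \emph{weaker} bound $v_i(\widetilde{x}) \geq (1-\epsilon)^{1/b_i} f_i\, v_i(x^*)/\widetilde{f}_i^{\,u}$; the implication as you state it runs in the wrong direction (from $ab\geq c$ and $a>1$ one cannot conclude $b\geq c$). The conclusion is still true, and the fix uses a fact you already invoke: since $\widetilde{x}$ with player $i$'s share discarded is feasible for the reduced instance, $\prod_{i'\neq i}[v_{i'}(\widetilde{x})]^{b_{i'}} \leq \prod_{i'\neq i}[v_{i'}(x^*_{-i})]^{b_{i'}}$, hence
\begin{equation*}
\left(\frac{\prod_{i'}[v_{i'}(\widetilde{x})]^{b_{i'}}}{\prod_{i'\neq i}[v_{i'}(x^*_{-i})]^{b_{i'}}}\right)^{1/b_i}
= \left(\frac{\prod_{i'\neq i}[v_{i'}(\widetilde{x})]^{b_{i'}}}{\prod_{i'\neq i}[v_{i'}(x^*_{-i})]^{b_{i'}}}\right)^{1/b_i} v_i(\widetilde{x}) \;\leq\; v_i(\widetilde{x}),
\end{equation*}
and your numerator bound already shows the left-hand side is at least $(1-\epsilon)^{1/b_i} f_i\, v_i(x^*)$. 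This is, in effect, how the paper avoids the case split entirely: the final valuation $\min\{1,\widetilde{f}_i^{\,u}\}\cdot v_i(\widetilde{x})$ is non-increasing in the denominator of Equation~\eqref{eq:approxfraction}, so one may pass to the worst case $\widetilde{x}_{-i}=x^*_{-i}$, under which the fraction is automatically at most $1$ and the cap never binds.
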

\begin{lemma}\label{adaptedtruth}
If a player misreports her preferences to the adapted PA mechanism, she may increase her valuation by at most a factor $(1-\epsilon)^{-2}$.
\end{lemma}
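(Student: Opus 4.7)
The plan is to track the two sources of approximation loss in the adapted mechanism and to show that each contributes at most one factor of $(1-\epsilon)^{-1}$ to the gain from misreporting. Following the setup of Theorem~\ref{thm:truth}, let $\widetilde{x}_T$ and $\widetilde{x}_L$ denote the approximate PF allocations computed when $i$ reports truthfully and when $i$ lies respectively, and let $D = \prod_{i' \neq i}[\bar{v}_{i'}(\widetilde{x}_{-i})]^{b_{i'}}$ be the common denominator appearing in Equation~\eqref{eq:approxfraction}, which is independent of $i$'s report since $\widetilde{x}_{-i}$ is computed in $i$'s absence. Define $F(x) = [v_i(x)]^{b_i}\prod_{i' \neq i}[\bar{v}_{i'}(x)]^{b_{i'}}$, the product objective evaluated using $i$'s \emph{true} valuation; when $i$ reports truthfully, $F$ coincides with the objective optimized by the mechanism. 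I will bound the ratio $(u_L/u_T)^{b_i}$ of the two utilities raised to the $b_i$th power, and take $b_i$th roots at the end using $b_i \geq 1$.

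The first $(1-\epsilon)^{-1}$ factor comes from the cap imposed by the $\min$ in Equation~\eqref{eq:approxfraction}. Restricting any full allocation to $N \setminus \{i\}$ yields a feasible allocation for the $-i$ subproblem, so the uncapped numerator $N_T = \prod_{i'\neq i}[\bar{v}_{i'}(\widetilde{x}_T)]^{b_{i'}}$ is at most the exact PF value $D^*$ for that subproblem. Combined with the $-i$ approximation guarantee $D \geq (1-\epsilon) D^*$, this forces the uncapped fraction $\widehat{f}_T = (N_T/D)^{1/b_i}$ to satisfy $(\widehat{f}_T)^{b_i} \leq (1-\epsilon)^{-1}$, so the cap can shrink $\widetilde{f}_T$ relative to $\widehat{f}_T$ by at most a factor of $(1-\epsilon)^{-1/b_i}$; hence $(\widetilde{f}_T)^{b_i} \geq (1-\epsilon)(\widehat{f}_T)^{b_i}$. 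For the lying case the trivial bound $(\widetilde{f}_L)^{b_i} \leq (\widehat{f}_L)^{b_i}$ suffices. Substituting these into the utilities yields $(u_T)^{b_i} \geq (1-\epsilon)\,F(\widetilde{x}_T)/D$ and $(u_L)^{b_i} \leq F(\widetilde{x}_L)/D$.

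The second $(1-\epsilon)^{-1}$ factor comes from the approximate maximization of $F$ on the full instance. Since the mechanism's objective in the truthful case is exactly $F$, the approximation guarantee gives $F(\widetilde{x}_T) \geq (1-\epsilon) F(x^*_T) \geq (1-\epsilon) F(\widetilde{x}_L)$, where the last inequality uses that $x^*_T$ maximizes $F$ over $\mathcal{F}$ and $\widetilde{x}_L \in \mathcal{F}$. Dividing the two displayed utility bounds and applying this inequality gives $(u_L/u_T)^{b_i} \leq (1-\epsilon)^{-2}$, whence $u_L/u_T \leq (1-\epsilon)^{-2/b_i} \leq (1-\epsilon)^{-2}$ as claimed.

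The main obstacle is to organize the bookkeeping so that these two $(1-\epsilon)^{-1}$ losses are cleanly separated, and to avoid confusing the objective $F$ (built from $i$'s true valuation) that governs $i$'s realized utility with the objective $F^L$ (built from $i$'s lie) that the mechanism actually optimizes in the lying case; once this separation is in place, the rest reduces to a short chain of inequalities whose two slacks are precisely the claimed $(1-\epsilon)^{-2}$.
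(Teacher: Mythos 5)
Your argument is correct, and it isolates the same two sources of $\epsilon$-loss that the paper uses (the approximate subproblem solution behind the denominator of Equation~\eqref{eq:approxfraction}, and the approximate maximization of the full-instance objective), but the bookkeeping is genuinely different. The paper pivots both runs around the exact-computation benchmark $f_i\cdot v_i(x^*)$: the proof of Lemma~\ref{lem:adaptedapprox} already gives the truthful lower bound $(1-\epsilon)\,f_i\,v_i(x^*)$, and a separate chain (using only $\prod_{i'}[v_{i'}(\widetilde{x}_{{\text{\tiny L}}})]^{b_{i'}}\leq \prod_{i'}[v_{i'}(x^*)]^{b_{i'}}$ and $D\geq(1-\epsilon)D^*$) bounds any misreport by $(1-\epsilon)^{-1/b_i}\,f_i\,v_i(x^*)$, whence the $(1-\epsilon)^{-2}$. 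You instead compare the two runs directly through the common denominator $D$ and the true-valuation objective $F$, letting $D$ cancel; this buys you an explicit treatment of the $\min\{1,\cdot\}$ cap (which the paper handles only implicitly, via the ``worst case $\widetilde{x}_{-i}=x^*_{-i}$'' reasoning in the previous lemma) and the slightly sharper exponent $(1-\epsilon)^{-2/b_i}$. The one thing your route requires that the paper's does not is that $\widetilde{x}_{-i}$, and hence $D$, be literally identical in the truthful and lying runs; this is the natural reading of the mechanism, since the subproblem input does not involve $i$'s report and the subroutine is deterministic, but it deserves a sentence, because if the subroutine could return different $(1-\epsilon)$-approximate allocations in the two runs, your cancellation fails and the chain as written only yields $(1-\epsilon)^{-3}$, whereas the paper's benchmark-based comparison is insensitive to this. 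With that one assumption made explicit, your proof is a valid and somewhat more self-contained alternative to the paper's.
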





Finally, we show that if the valuation functions are, for example, concave
and homogeneous of degree one, then a feasible approximate PF allocation
can indeed be computed in polynomial time.

\begin{lemma}\label{lem:concave}
For concave homogeneous valuation functions of degree one, there exists an algorithm that computes a
feasible allocation $\widetilde{x}$ in time polynomial in $\log 1/\epsilon$ and the problem size, such that
\[\prod_i [v_i(\widetilde{x})]^{b_i} ~\geq~ (1-\epsilon)  \prod_i [v_i(x^*)]^{b_i}. \]
\end{lemma}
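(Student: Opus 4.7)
The plan is to recognize the PF allocation as the optimum of a concave program and to invoke a standard interior-point / ellipsoid solver. Recall from Section~\ref{sec:prelim} that $x^*$ maximizes $\sum_{i} b_i \log v_i(x)$ over $x\in\mathcal{F}$. When each $v_i$ is concave, nonnegative, and homogeneous of degree one, the map $x\mapsto \log v_i(x)$ is concave on the set $\{x : v_i(x)>0\}$, so the objective $\Phi(x)=\sum_{i}b_i\log v_i(x)$ is concave; the constraint set $\mathcal{F}$ is a polytope defined by $O(nm)$ linear inequalities. The problem of computing $x^*$ is thus a concave maximization problem over a polynomial-size convex feasible region.

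Given oracle access to the values and (sub)gradients of the $v_i$ (a mild standing assumption, which for example is automatic when the $v_i$ are additive linear, Leontief, Cobb-Douglas, CES, or any functions represented in a natural closed form), the ellipsoid method, or equivalently a standard interior-point scheme, produces in time polynomial in the problem size and $\log(1/\epsilon)$ a feasible $\widetilde{x}\in\mathcal{F}$ satisfying
\[
\Phi(x^*) - \Phi(\widetilde{x}) \;\leq\; B\cdot \epsilon', \qquad \text{where } B=\sum_i b_i,
\]
for any target additive error $\epsilon'>0$. Exponentiating and using $e^{-\epsilon'} \ge 1-\epsilon'$ turns this additive guarantee on $\Phi/B$ into the desired multiplicative one on the geometric mean:
\[
\left[\prod_i v_i(\widetilde{x})^{b_i}\right]^{1/B} \;\geq\; e^{-\epsilon'}\left[\prod_i v_i(x^*)^{b_i}\right]^{1/B} \;\geq\; (1-\epsilon')\left[\prod_i v_i(x^*)^{b_i}\right]^{1/B}.
\]
Choosing $\epsilon'=\epsilon$ (and absorbing a constant factor into the polynomial bound on the running time) gives the claim.

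The main obstacle is the behaviour of $\Phi$ at the boundary of $\mathcal{F}$, where some $v_i(x)$ may vanish and $\log v_i(x)\to -\infty$, spoiling the Lipschitz-type parameters needed by a generic convex solver. The standard fix is to restrict optimization to a slightly shrunk feasible region $\mathcal{F}_\delta\subseteq\mathcal{F}$ on which every bidder is guaranteed a nonzero share (for example, reserve an $\epsilon$-fraction of each item and split it uniformly among all bidders), which forces $v_i(x)\geq \delta$ everywhere for a $\delta$ polynomially related to $\epsilon$ and the size of the valuations. Homogeneity of degree one lets us compare $\Phi$ on $\mathcal{F}_\delta$ with $\Phi$ on $\mathcal{F}$ losing only a multiplicative $(1-O(\epsilon))$ factor in $\prod_i v_i^{b_i}$, which can be folded into the approximation guarantee. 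With this bounded domain the solver's parameters are polynomially bounded, and the running time is polynomial in the problem size and $\log(1/\epsilon)$ as required.
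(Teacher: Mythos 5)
Your proposal is correct in substance but handles the two technical hurdles differently from the paper. The paper never touches the logarithmic objective: it observes that the weighted geometric mean $\bigl(\prod_i [v_i(x)]^{b_i}\bigr)^{1/B}$ is itself concave (and homogeneous of degree one) when the $v_i$ are, so it solves that bounded concave program directly and thus has no boundary singularity to worry about; the delicate step there is instead converting the solver's \emph{additive} error into a \emph{multiplicative} one, which the paper does by lower-bounding the optimum --- after normalizing $v_i(\text{everything})=1$, the competitive-equilibrium interpretation gives each bidder value at least $b_i/B$ at the optimum, so the optimal geometric mean is at least $\min_i b_i/B$ and an additive error of $\epsilon\cdot\min_i b_i/B$ suffices. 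You keep the objective $\sum_i b_i\log v_i(x)$, so the additive-to-multiplicative conversion is free by exponentiation, but you must then repair the $\log v_i\to-\infty$ behaviour, which you do by reserving a uniformly split $\epsilon$-fraction of every item; by concavity and homogeneity this costs each bidder only a $(1-\epsilon)$ factor, and it bounds the objective's range polynomially so the generic solver applies. Both routes work; the paper's buys a cleaner solver invocation at the price of needing the CEEI lower bound on OPT, yours avoids any bound on OPT at the price of the domain-shrinking argument. One bookkeeping point: your guarantee (like the domain-shrinking loss) is per-bidder, i.e.\ $(1-\epsilon')$ on the geometric mean and hence only $(1-\epsilon')^{B}$ on the product $\prod_i[v_i(\cdot)]^{b_i}$ appearing in the lemma statement, so to match the stated bound literally you should take $\epsilon'\approx\epsilon/B$, which only adds a $\log B$ factor through the $\log(1/\epsilon')$ dependence; this is not a real gap (the paper's own proof delivers the same geometric-mean-level guarantee, which is also exactly what the adapted PA mechanism requires).
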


\subsection{Extension to General Homogeneous Valuations}
We can actually extend most of the results that we have shown for homogeneous valuation functions
of degree one to any valuation function that can be expressed as $v_i(f\cdot x)=g_i(f)\cdot v_i(x)$,
where $g_i(\cdot)$ is some increasing invertible function; for homogeneous valuation functions of degree $d$,
this function is $g_i(f)=f^d$. If this function is known for each bidder, we can then adapt the PA
mechanism as follows: instead of allocating to bidder $i$ a fraction $f_i$ of her allocation according to $x^*$ as defined in
Equation~\eqref{eq:fraction}, we instead allocate to this bidder a fraction $g_i^{-1}(f_i)$, where $g_i^{-1}(\cdot)$
is the inverse function of $g_i(\cdot)$.
If, for example, some bidder has a homogeneous valuation function of degree $d$, then allocating her
a fraction $f_i^{1/d}$ of her PF allocation has the desired effect and both truthfulness and the
same approximation factor guarantees still hold. The idea behind this transformation is that all that we
need in order to achieve truthfulness and the approximation factor is to be able to discard some fraction
of a bidder's allocation knowing exactly what fraction of 
her valuation this will correspond to.

\section{The Strong Demand Matching Mechanism}\label{sec:PFapprox}

The main result of the previous section shows that one can guarantee a good constant factor approximation
for any problem instance within a very large class of bidder valuations. The subsequent impossibility
result shows that, even if we restrict ourselves to problem instances with additive linear bidder valuations,
no truthful mechanism can guarantee more than a $1/2$ approximation.

In this section we study the question of whether one can achieve even better factors when restricted to
some well-motivated class of instances. We focus on additive linear valuations, and we provide a positive
answer to this question for problem instances where every item is highly demanded.
More formally, we consider problem instances for which the PF price (or equivalently
the competitive equilibrium price)
of every item is large when the budget of every player is fixed to one unit of scrip money\footnote{Remark: Our
mechanism does not make this assumption, but
the approximation guarantees are much better with this assumption.}.
The motivation behind this class of instances comes from problems
such as the one that arose with the Czech privatization auctions~\cite{AH2000}.
For such instances, where the number of players is much higher than the number of
items, one naturally anticipates that all item prices will be high in equilibrium.

For the rest of the chapter we assume that the weights of all players are equal and that their
valuations are additive linear.
Let $p^*_j$ denote the PF price of item $j$ when every bidder $i$'s budget $b_i$ is equal to 1.
Our main result in this section is the following:

\begin{theorem}\label{thm:SDM}
For additive linear valuations there exists a truthful mechanism that achieves
an approximation factor of $\, \min_j\left\{p^*_j/\lceil{p^*_j}\rceil\right\}$.
\end{theorem}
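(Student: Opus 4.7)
The plan is to design the Strong Demand Matching mechanism in two phases. First, compute the Proportional Fairness prices $p^*_j$ arising from the PF solution with all budgets equal to one; equivalently, these are the competitive equilibrium prices where every bidder spends exactly one unit of scrip on items in her bang-per-buck set $\arg\max_j v_{ij}/p^*_j$. Second, round each price up to the integer capacity $c_j := \lceil p^*_j \rceil$, and find an assignment that matches every bidder $i$ to exactly one item $j_i$ which is a \emph{favorite at the rounded prices}, i.e.\ $j_i \in \arg\max_j v_{ij}/c_j$, while ensuring that at most $c_j$ bidders are assigned to item $j$. Each bidder assigned to $j$ receives a $1/c_j$ fraction of that item; the remaining fraction (if the item is not fully demanded) is discarded.

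The first thing to verify is that such a matching always exists. Here I would use a Hall-type argument: the total capacity is $\sum_j c_j \geq \sum_j p^*_j = n$ (the total budget), and the fractional PF allocation itself, appropriately rescaled so that each bidder puts unit mass only on items tied for the minimum of $v_{ij}/c_j$ over her PF bundle, exhibits a fractional matching that respects the capacities. By the integrality of bipartite matching polytopes with integer capacities, an integer matching exists, and one can also argue that the set of favorite items at prices $c_j$ contains the PF bundle so Hall's condition is met.

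Next I would establish the approximation ratio. Writing $u^*_i = v_{ij}/p^*_j$ for any item $j$ in bidder $i$'s PF bundle, the budget identity $\sum_j x^*_{ij} p^*_j = 1$ gives $v_i(x^*) = u^*_i$. For the SDM outcome, since bidder $i$ receives a $1/c_{j_i}$ share of her favorite item at prices $c_j$, we get
\begin{equation*}
v_i(\text{SDM}) \;=\; \frac{v_{ij_i}}{c_{j_i}} \;\geq\; \frac{v_{ij^*}}{p^*_{j^*}} \cdot \frac{p^*_{j^*}}{c_{j^*}} \;\geq\; u^*_i \cdot \min_{j}\frac{p^*_j}{\lceil p^*_j\rceil},
\end{equation*}
for any $j^*$ in her original PF bundle, which yields the claimed competitive factor.

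The hard part, and the step I expect to be the main obstacle, is truthfulness. A misreport by bidder $i$ changes both the PF prices and the rounded capacities, and a priori she might engineer lower $c_{j}$ at a favorite item so as to enlarge her fraction $1/c_j$. The plan is to show that this cannot improve her utility, by interpreting the second stage as a unit-demand envy-free assignment at prices $c_j$ over equally-sized pieces of each item. Viewing each piece as a separate good, the pair (PF prices, matching on favorites at rounded prices) forms a competitive equilibrium with unit budgets in this derived market; envy-freeness at prices $c_j$ means bidder $i$ weakly prefers her assigned piece to any other available piece. I would then invoke an argument in the spirit of known truthfulness results for envy-free market-clearing mechanisms (analogous to the money-burning interpretation developed in Section~\ref{sec:withmoney}): if bidder $i$ deviates, one can trace the effect on the PF convex program and show that either (i) the deviation weakly increases some $c_j$ at every item bidder $i$ could still be matched to, thereby weakly decreasing her share, or (ii) the matching now assigns her a less preferred item at the new rounded prices, which by envy-freeness at the truthful prices was already weakly dominated by her truthful assignment. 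Combining these two cases yields that truth-telling is a dominant strategy, completing the proof.
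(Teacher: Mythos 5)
Your mechanism is not the paper's: the paper's SDM never computes the PF prices and rounds them item by item; it runs an ascending-price procedure that finds the \emph{minimal} prices $q\geq 1$ admitting a "valid assignment" (each bidder matched to a maximum bang-per-buck item, at most $\lfloor q_j\rfloor$ bidders per item), and allocates a $1/q_j$ fraction. The first genuine gap in your version is the existence claim. You assert that each bidder's favorite set at the rounded prices $c_j=\lceil p^*_j\rceil$ contains her PF bundle, so that Hall's condition follows from the PF allocation. This is false: rounding multiplies different items' prices by different factors, which changes the bang-per-buck ordering. Concretely, take two items and three identical bidders with values $(v_{ia},v_{ib})=(1.5,1)$. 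The PF prices are $p^*_a=1.8$, $p^*_b=1.2$, so $c_a=c_b=2$; at the rounded prices every bidder's unique favorite is item $a$, whose capacity $2$ is less than $3$, so no matching of all bidders to favorites at rounded prices exists and your mechanism is not even well defined. The paper sidesteps exactly this by scaling \emph{all} prices by the single factor $f=\max_j \lceil p^*_j\rceil/p^*_j$ (a uniform scaling preserves every bidder's MBB set) and proves existence of a valid assignment at prices $fp^*$ not by a capacity count but by turning the PF allocation graph into a forest and assigning each bidder to a child item or to her parent item; minimality of the SDM prices (Lemma~\ref{lem:structure}) then gives $q\leq fp^*$ and the stated factor.

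The second gap is truthfulness, which you acknowledge but do not supply. Envy-freeness of the final assignment at the posted prices does not imply strategyproofness of the mechanism: in your scheme both the item prices and the capacities $c_j=\lceil p^*_j\rceil$ are functions of the reports, and a bidder can plausibly shade her report to push some $p^*_j$ just below an integer, lowering $c_j$ and raising her share from $1/c_j$ to $1/(c_j-1)$; nothing in your sketch (cases (i)/(ii)) rules this out. The paper's truthfulness proof hinges precisely on the structural lemma that the ascending auction terminates at the coordinatewise minimal prices supporting a valid assignment (the Demange--Gale--Sotomayor-style property), and then compares the truthful run with a run restarted from the manipulated prices; a one-shot "compute PF prices, round, match" mechanism has no such minimality property, so this argument is not available to you and would have to be replaced by something substantively different.
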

\noindent Note that if $k=\min_j p^*_j$, then this approximation factor is at least $k/(k+1)$.

We now describe our solution which we call the {\em Strong Demand Matching} mechanism (SDM).
Informally speaking, SDM starts by giving every bidder a unit amount of {\em scrip} money.
It then aims to discover {\em minimal} item prices such that the demand of each bidder at 
these prices can be satisfied using (a fraction of) just one item. In essense, our mechanism
is restricted to computing allocations that assign each bidder to just one item, and this 
restriction of the output space renders the mechanism truthful and gives an 
approximation guarantee much better than that of the PA mechanism for instances where every item is
highly demanded.


The procedure used by our mechanism is reminiscent of the method utilized by Demange et al.\
for multi-unit auctions~\cite{DemangeGS86}. Recall that this method increases the prices of all over-demanded items
uniformly until the set $R$ of over-demanded items changes, iterating this process until $R$ becomes
empty.
At that point, bidders are matched to preferred items.
For our setting, each bidder will seek to spend all her money, and we employ an analogous rising price methodology,
again making allocations when the set of over-demanded items is empty.
In our setting, the price increases are multiplicative rather than additive, however.
This approach also has some commonality with the
algorithm of Devanur et al.~\cite{DevanurPSV08} for computing the competitive equilibrium
for divisible items and bidders with additive linear valuations.
Their algorithm also proceeds by increasing the prices of over-demanded
items multiplicatively. Of course, their algorithm does not yield a truthful mechanism. Also, in order to 
achieve polynomial running time in computing the competitive equilibrium, their algorithm needs, at any one time, to be 
increasing the prices of a carefully selected subset of these items; this appears to make their 
algorithm quite dissimilar to ours. Next we specify our mechanism in more detail.

Let $p_j$ denote the price of item $j$, and let the \emph{bang per buck} that bidder $i$ gets
from item $j$ equal $v_{ij}/p_j$. We say that item $j$ is an MBB item of bidder $i$ if she gets the
maximum bang per buck from that item\footnote{Note that for each bidder there could be multiple MBB items
and that in the PF solution bidders are only allocated such MBB items.}.
For a given price vector $p$, let the demand graph $D(p)$ be a bipartite graph with bidders on one side and items 
on the other, such that there is an edge between bidder $i$ and item $j$ if and only if $j$ is an MBB item of bidder $i$.
We call $c_j=\lfloor p_j\rfloor$ the \emph{capacity} of item $j$ when
its price is $p_j$, and we say an assignment of bidders to items is \emph{valid}
if it matches each bidder to one of her MBB items and no item $j$ is matched to more than $c_j$ bidders.
Given a valid assignment $A$, we say an item $j$ is \emph{reachable} from bidder $i$ if there exists an alternating
path $(i, j_1, i_1, j_2, i_2,\cdots, j_k, i_k, j)$ in the graph $D(p)$ such that edges $(i_1, j_1),\cdots ,(i_k, j_k)$
lie in the assignment $A$. Finally, let $d(R)$ be the collection of bidders with all their MBB items in set $R$. 
Using these notions, we define the Strong Demand Matching mechanism in Figure~\ref{mech:SDM}.

\LinesNumbered	
\SetAlgoVlined
\begin{algorithm}[!ht]
\SetEndCharOfAlgoLine{.}
Initialize the price of every item $j$ to $p_j=1$\;
Find a valid assignment maximizing the number of matched bidders\;
\uIf{all the bidders are matched}{conclude with Step 15.}
Let $U$ be the set of bidders who are not matched in Step 2\;
Let $R$ be the set of all items reachable from bidders in the set $U$\;
Increase the price of each item $j$ in $R$ from $p_j$ to $r\cdot p_j$,
where $r\geq 1$ is the minimum value for which one of the following events takes place:\\
\uIf{the price of an item in $R$ reaches an integral value}{continue with Step 2.}
\uIf{the set of MBB items of some bidder $i \in d(R)$ increases, causing the set $R$ to grow}
{
	\uIf{for each item $j$ added to $R$, the number of bidders already matched to it equals $c_j$}{continue with Step 6.}
	\uIf{some item $j$ added to $R$ has $c_j$ greater than the number of bidders matched to it}{continue with Step 2.}
}
Bidders matched to some item $j$ are allocated a fraction $1/p_j$ of it.
	\caption{The Strong Demand Matching mechanism.}\label{mech:SDM}
\end{algorithm}

\subsection{Running time}
We first explain how to carry out Steps 6-14. Set $R$ can be computed using a breadth-first-search like algorithm.
To determine when the event of Step 8 takes place, we just need to know the smallest $\lceil p_j\rceil/p_j$
ratio over all items whose price is being increased.
For the event of Step 10, we need to calculate, for each bidder in $d(R)$, the ratio of the \emph{bang per buck}
for her MBB items and for the items outside the set $R$.

In terms of running time, if $c(R)=\sum_{j\in R} c_j$ denotes the total capacity in $R$, it is not difficult to see 
that if $U$ is non-empty, $|d(R)| > c(R)$. Note that each time either the event of Step 8 or the event of Step 13 occurs, 
$c(R)$ increases by at least 1, and thus, using the alternating path from a bidder in the set $U$ to the corresponding item, 
we can increase the number of matched bidders by at least 1; this means that this can occur at most $n$ times. 
The only other events are the unions (of connected components in graph $D(p)$) resulting from the event of Step 11.
Between successive iterations of either Step 8 or 13, there can be at most $\min(n,m)$ iterations of Step 11.
Thus there are $O(n\cdot \min(n,m))$ iterations of Step 11 overall and $O(n)$ iterations of Steps 8 and 13.

\subsection{Truthfulness and Approximation}

The proofs of the truthfulness and the approximation of the SDM mechanism use the following lemma
which states that the prices computed by the mechanism are the minimum prices supporting
a valid assignment. An analogous result was shown in~\cite{DemangeGS86}
for a multi-unit auction of non-divisible items.
We provide an algorithmic argument.

\begin{lemma}\label{lem:structure}
For any problem instance, if $p\geq 1$ is a set of prices for which there exists a valid assignment,
then the prices $q$ computed by the SDM mechanism will satisfy $q\leq p$.
\end{lemma}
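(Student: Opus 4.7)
I will proceed by induction on the events performed by SDM, maintaining the invariant $q(t) \leq p$ coordinate-wise at every moment. The base case is immediate since $q(0) = \mathbf{1} \leq p$. For the inductive step, consider a Step~6 raise of the reachable set $R$ by factor $r^*$ starting from prices $q$, and suppose toward contradiction that $r^* q_{j^*} > p_{j^*}$ for some $j^* \in R$. Set $r_{\max} := \min_{j \in R} p_j / q_j$, so $r_{\max} < r^*$, and let $T := \{j \in R : p_j/q_j = r_{\max}\}$, which is nonempty.

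Two local facts follow from the stopping conditions on the raise. First, $\lfloor p_j \rfloor = \lfloor q_j \rfloor$ for every $j \in T$: Step~8 not triggering before $r^*$ gives $r^* q_j \leq \lceil q_j \rceil$, so $p_j < r^* q_j \leq \lceil q_j \rceil$, and together with $p_j \geq q_j$ this pins down the integer parts. Second, letting $B$ be the set of bidders $i \in d_q(R)$ whose MBB set at $q$ intersects $T$, I claim each such $i$ has $\mathrm{MBB}_p(i) \subseteq T$: for $j_0 \in T \cap \mathrm{MBB}_q(i)$ and any $j' \notin R$, Step~10's non-triggering gives $(v_{ij_0}/q_{j_0})/(v_{ij'}/q_{j'}) \geq r^*$, and hence
\begin{equation*}
\frac{v_{ij_0}/p_{j_0}}{v_{ij'}/p_{j'}} ~=~ \frac{v_{ij_0}/q_{j_0}}{v_{ij'}/q_{j'}} \cdot \frac{p_{j'}/q_{j'}}{r_{\max}} ~\geq~ \frac{r^*}{r_{\max}} ~>~ 1,
\end{equation*}
using $p_{j'}/q_{j'} \geq 1$; for $j_1 \in R \setminus T$, the strict inequality $p_{j_1}/q_{j_1} > r_{\max}$ together with $(v_{ij_0}/q_{j_0})/(v_{ij_1}/q_{j_1}) \geq 1$ yields the same conclusion.

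Combining the two facts, $B \subseteq d_p(T)$ and $c_p(T) = c_q(T)$, so Hall's condition for a valid assignment at $p$ forces $|B| \leq c_q(T)$. The contradiction will follow once I establish that, in fact, $|B| > c_q(T)$, and this is the main obstacle. I would argue it using two structural properties of the max matching at Step~2. (i) Every $j \in R$ is fully matched in any max matching (else one could augment along an alternating path from $U$), so precisely $c_q(T)$ bidders are matched to items in $T$, and all of them belong to $B$ since the matched item is an MBB. (ii) Since $T \subseteq R$, there is an alternating path from some $i^* \in U$ to some $j \in T$; flipping the matching along this path yields another max matching in which the bidder previously matched to $j$ becomes unmatched yet retains $j \in T$ as an MBB item. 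The cardinality $|B|$ depends only on the MBBs at $q$ (not on which max matching is used), so combining (i) with (ii) in the flipped matching gives $|B| \geq c_q(T) + 1 > c_q(T)$, the required contradiction. The alternating-path rearrangement underlying (ii) is the most delicate step of the argument.
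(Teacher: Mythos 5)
Your proof is correct in substance and, at the top level, follows the same strategy as the paper: suppose the mechanism is about to push some price above $p$, use the valid assignment at $p$ to bound the number of bidders whose demand at prices $p$ is confined to a critical set of items whose capacity has not changed, and contradict that bound by a counting argument in the mechanism's demand graph. The executions differ, though. The paper freezes the run at the maximal price vector $\tilde q\leq p$ reached before any price would exceed $p$ for the first time and works with $S=\{j:\tilde q_j=p_j\}$; since prices on $S$ agree exactly with $p$ while every other item is strictly cheaper at $\tilde q$, the demand-containment step (``any bidder with an MBB item in $S$ at $\tilde q$ demands only items of $S$ at $p$'') is immediate and requires neither your Fact~1 nor the Step-10 stopping condition. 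The paper then simply asserts that the resulting capacity bound means no item of $S$ is over-demanded, hence its price is never raised; the alternating-path/flip counting you carry out in (i)--(ii) is precisely what is needed to make that last sentence rigorous, so on that point your writeup is more detailed than the paper's. Your raise-by-raise induction with the tight set $T=\arg\min_{j\in R}p_j/q_j$ and the Step-8/Step-10 inequalities is a legitimate alternative route: it costs you the floor bookkeeping and the ratio comparisons, but it keeps the whole argument local to a single price raise.

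Two details in your version need patching, though neither is fatal. First, membership in $B$ requires $i\in d_q(R)$, and in (i) you justify it only by ``the matched item is an MBB''; you need the standard closure observation that any bidder matched to an item of $R$ (and any unmatched bidder) has all of her MBB items in $R$, since an alternating path reaching her matched item extends through her. The same observation is what lets you place into $B$ the bidders matched to $T$ under the flipped assignment and the newly unmatched bidder $i'$. Relatedly, your claim that every $j\in R$ is fully matched ``in any max matching'' is only justified for the Step-2 matching, with respect to which $R$ is defined; for the flipped matching it is cleaner to note that the flip preserves every item's load, so the items of $T$ remain saturated and the assignment remains valid. Second, when $q_j$ is integral the next Step-8 threshold is $\lfloor q_j\rfloor+1$, not $\lceil q_j\rceil=q_j$; the correct bound is $r^*q_j\leq\lfloor q_j\rfloor+1$, which still gives $\lfloor p_j\rfloor=\lfloor q_j\rfloor$ for $j\in T$, so your Fact~1 survives.
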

\begin{proof}
Aiming for a contradiction, assume that $q_j > p_j$ for some item $j$, and let $\tilde{q}$ be the
maximal price vector that the SDM mechanism reaches before increasing the price of some item
$j'$ beyond $p_{j'}$ for the first time. In other words, $\tilde{q}\leq p$ and $\tilde{q}_{j'}=p_{j'}$.
Also, let $S = \{j\in M ~|~ \tilde{q}_j=p_j\}$, which implies that $\tilde{q}_j <p_j$
for all $j\notin S$. Clearly, any bidder $i$ who has MBB items in $S$ at prices $\tilde{q}$
will not be interested in any other item at prices $p$. This implies that the valid assignment
that exists for prices $p$ assigns every such bidder to one of her MBB items $j\in S$. Therefore,
the total capacity of items in $S$ at prices $\tilde{q}$ is large enough to support all these bidders
and hence no item in $S$ will be over-demanded at prices $\tilde{q}$. As a result, the SDM mechanism
will not increase the price of any item in $S$, which leads us to a contradiction.
\end{proof}

Using this lemma we can now prove the statements regarding the truthfulness and the approximation factor
of SDM; the following two lemmata imply Theorem~\ref{thm:SDM}.

\begin{lemma}
The SDM mechanism is truthful.
\end{lemma}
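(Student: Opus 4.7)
The plan is to reduce truthfulness to a single price inequality and then invoke Lemma~\ref{lem:structure}. Denote by $q$ the prices and $j$ the item assigned to bidder $i$ when she reports truthfully, and by $q'$, $j^*$ the analogous quantities when she misreports. Her true utility under truthful reporting is $v_{ij}/q_j$, since she is allocated a fraction $1/q_j$ of item $j$; under the lie it is $v_{ij^*}/q'_{j^*}$. Because SDM always matches a bidder to one of her \emph{reported} MBB items, under truthful reporting $j$ is a true MBB item at prices $q$, and hence $v_{ij}/q_j \geq v_{ij^*}/q_{j^*}$. It therefore suffices to show $q_{j^*} \leq q'_{j^*}$, for then $v_{ij^*}/q'_{j^*} \leq v_{ij^*}/q_{j^*} \leq v_{ij}/q_j$.

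To prove $q_{j^*} \leq q'_{j^*}$, I will exhibit a price vector $p \geq 1$ supporting a valid assignment in the truthful instance with $p_{j^*} = q'_{j^*}$; Lemma~\ref{lem:structure} then yields $q \leq p$ componentwise. The construction starts from the valid assignment $A'$ that SDM produces at prices $q'$ in the lying instance and vacates bidder $i$'s slot at $j^*$. Since the other $n-1$ bidders report identically in both instances, $A' \setminus \{i\}$ remains a valid partial matching in the truthful demand graph at $q'$, and $j^*$ now has one slack slot out of $c_{j^*}$. If bidder $i$'s true MBB at $q'$ contains an item with a slack slot (possibly $j^*$ itself, or another item reachable from her true MBB by an alternating path in $A' \setminus \{i\}$), then augmenting along that path reinstates her in a valid assignment and we take $p = q'$. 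Otherwise, we mimic SDM's own ascending procedure on the truthful instance: compute the set $R$ of items reachable via alternating paths from bidder $i$'s true MBB in the current demand graph, and raise the prices on $R$ multiplicatively until bidder $i$ can be matched.

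The crux is to maintain the invariant $j^* \notin R$ throughout the ascent, so that $j^*$'s price is never raised and the resulting $p$ satisfies $p_{j^*} = q'_{j^*}$. This invariant is preserved because the slack slot at $j^*$ persists---its price is frozen, hence so is its capacity---and any alternating path reaching $j^*$ would yield an augmentation that contradicts our being stuck. Termination of the ascent follows the same combinatorial accounting used in the running-time analysis of SDM itself, since each iteration either strictly enlarges $R$ or strictly advances the matching. The hardest technical point will be verifying that the invariant survives each MBB-expansion event of the ascent: when a bidder already matched inside $R$ acquires a new MBB item, one must rule out that item being $j^*$, which again reduces to the persistence of $j^*$'s slack slot together with the alternating-path characterization of $R$. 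Once a valid $p$ is produced, Lemma~\ref{lem:structure} yields $q_{j^*} \leq p_{j^*} = q'_{j^*}$ and the theorem follows.
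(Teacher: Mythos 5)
Your proposal is correct and follows essentially the same route as the paper: the key step is the minimal-price Lemma~\ref{lem:structure}, applied to prices obtained by continuing the multiplicative ascent from $q'$ on the truthful instance, where the slot vacated by bidder $i$ at $j^*$ together with the alternating-path/slack argument guarantees that $j^*$'s price is never raised. Your only deviation is cosmetic—you fold the paper's two cases (whether or not $j^*$ is a true MBB item at $q'$) into the single inequality $q_{j^*}\leq q'_{j^*}$—so no further comparison is needed.
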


\begin{proof}
Given a problem instance, fix some bidder $i$ and let $x'$ and $q'$ denote the assignment and the prices
that the SDM mechanism outputs instead of $x$ and $q$ when this bidder reports a valuation vector $v'_i$
instead of her true valuation vector $v_i$.

If the item $j$ to which bidder $i$ is assigned in $x'$ is one of her MBB items w.r.t.\ her true valuations $v_i$
and prices $q'$, then $x'$ would be a valid assignment for prices $q'$ even if the bidder had not lied. 
Lemma~\ref{lem:structure} therefore implies that $q\leq q'$. Since the item to which bidder $i$ is assigned  
by $x$ is an MBB item and $q\leq q'$, we can conclude that $v_i(x)\geq v_i(x')$.

If on the other hand item $j$ is not an MBB item w.r.t.\ the true valuations of bidder $i$ and prices $q'$,
we consider an alternative valid assignment and prices. Starting from prices $q'$, we run the steps of the 
SDM mechanism assuming bidder $i$ has reported her true valuations $v_i$, and we consider the assignment 
$\bar{x}$ and the prices $\bar{q}$ that the mechanism would yield upon termination. Assignment $\bar{x}$ would 
clearly be valid for prices $\bar{q}$ if bidder $i$ had reported the truth; therefore Lemma~\ref{lem:structure} 
implies $q\leq \bar{q}$ and thus $v_i(x)\geq v_i(\bar{x})$. As a result, to conclude the proof it suffices to show 
that $v_i(\bar{x})\geq v_i(x')$. To verify this fact, we show that $q'_j=\bar{q}_j$, implying that $\bar{x}$ 
allocates to $i$ (a fraction of) some item which she values at least as much as a $1/q'_j$ fraction of item $j$. 

Consider the assignment $x'_{-i}$ that matches all bidders $i'\neq i$ according to $x'$ and leaves bidder $i$
unmatched. In the graph $D(q')$, if item $j$ is reachable from bidder $i$ given the valid assignment $x'_{-i}$, 
then all bidders would be matched by the very first execution of Step 1 of the mechanism. This is true because 
the capacity of item $j$ according to prices $q'$ is greater than the number of bidders matched to it in $x'_{-i}$. 
The alternating path $(i,j_1,i_1,j_2,i_2,\cdots,j_k,i_k,j)$ implied by the reachability can therefore be used to 
ensure that bidder $i$ is matched to an MBB item as well; this is achieved by matching $i$ to $j_1$, $i_1$ to $j_2$
and so on. Otherwise, if not all bidders can be matched in that very first step of the SDM mechanism, the mechanism can 
instead match the bidders according to $x'_{-i}$ and set $U=\{i\}$.\footnote{Note that this may not be the only way 
in which the SDM mechanism can proceed but, since the bidders' valuations for the final outcome are unique,
this is without loss of generality.} Before the price of item $j$ can be increased, Step 10 must add this item 
to the set $R$. If this happens though, item $j$ becomes reachable from bidder $i$ thus causing an alternating path
to form, and the next execution of Step 1 of the mechanism yields a valid assignment before $q'_j$ is ever increased.
\end{proof}

\begin{lemma}
The SDM mechanism achieves
an approximation factor of $\, \min_j\left\{p^*_j/\lceil{p^*_j}\rceil\right\}$.
\end{lemma}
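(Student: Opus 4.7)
The plan is to compare the prices $q$ computed by SDM against the PF prices $p^*$ (scaled by a suitable factor), and then translate a price inequality into a value inequality via the bang-per-buck interpretation of both outcomes. Throughout, I will exploit the fact that in both the PF solution and the SDM solution, every bidder spends her full unit budget only on MBB items, so her final value equals her unit budget times her bang per buck at the prevailing prices.

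First I will set up the comparison prices. Let $\alpha = \max_j \lceil p^*_j \rceil / p^*_j$ and let $\hat p_j = \alpha\, p^*_j$. Since all prices are scaled by the same factor, the MBB edges are identical at prices $p^*$ and $\hat p$, so $D(\hat p) = D(p^*)$; also $\hat p_j \geq \lceil p^*_j \rceil \geq 1$, and the item capacity under SDM at $\hat p$ is $\lfloor \hat p_j \rfloor \geq \lceil p^*_j \rceil$. I would then show that a valid assignment exists at prices $\hat p$ by verifying Hall's condition (with capacities) on $D(\hat p)$. For any subset $S$ of bidders, let $N(S)$ be the union of their MBB items. In the PF allocation $x^*$ each bidder in $S$ receives only MBB items and spends her full budget of $1$, so the total money spent on items in $N(S)$ is at least $|S|$, giving $\sum_{j \in N(S)} p^*_j \geq |S|$. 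Hence the total capacity is
\[
\sum_{j \in N(S)} \lfloor \hat p_j \rfloor \;\geq\; \sum_{j \in N(S)} \lceil p^*_j \rceil \;\geq\; \sum_{j \in N(S)} p^*_j \;\geq\; |S|,
\]
so Hall's condition holds and a valid assignment exists at prices $\hat p$.

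Invoking Lemma~\ref{lem:structure} with $p = \hat p$ yields $q \leq \hat p$, i.e., $q_j \leq \alpha\, p^*_j$ for every $j$. From here I would finish as follows. Say the SDM outcome matches bidder $i$ to item $j$; then $v_i(x) = v_{ij}/q_j$ equals her bang per buck at $q$, which by MBB maximality is at least $v_{ij'}/q_{j'}$ for any item $j'$. Pick $j'$ to be any MBB item of $i$ at prices $p^*$; then
\[
v_i(x) \;\geq\; \frac{v_{ij'}}{q_{j'}} \;\geq\; \frac{v_{ij'}}{\alpha\, p^*_{j'}} \;=\; \frac{1}{\alpha}\cdot \frac{v_{ij'}}{p^*_{j'}} \;=\; \frac{1}{\alpha}\, v_i(x^*),
\]
where the final equality uses that $v_i(x^*)$ equals $i$'s bang per buck at $p^*$ times her unit budget. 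Substituting $1/\alpha = \min_j p^*_j/\lceil p^*_j \rceil$ gives the claimed bound for every bidder simultaneously.

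The main obstacle is the Hall's-condition step: I need the slightly nonobvious observation that even though the PF solution is fractional, the total money spent by any bidder subset $S$ on its MBB neighborhood is at least $|S|$, which is what translates to enough integer capacity once prices are inflated to $\alpha p^*$. Everything else is a direct application of Lemma~\ref{lem:structure} and the bang-per-buck characterization of both the PF and SDM outcomes.
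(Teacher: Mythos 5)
Your proof is correct, and it shares the paper's overall skeleton — establish that a valid assignment exists at the inflated prices $f p^*$ with $f=\max_j \lceil p^*_j\rceil/p^*_j$ (your $\alpha$), invoke Lemma~\ref{lem:structure} to get $q\leq f p^*$, and then convert the price bound into a value bound via the bang-per-buck identities $v_i(x)=\max_j v_{ij}/q_j$ and $v_i(x^*)=v_{ir}/p^*_r$ for an MBB item $r$ at $p^*$. Where you genuinely diverge is the existence step: the paper gives an explicit construction, turning the support graph of $x^*$ into a forest by eliminating cycles (exploiting bidder indifference among MBB items), rooting each tree at a bidder, and assigning each bidder to a child item if she has one and to her parent item otherwise; the capacity bound then follows from a per-item budget count (every leaf bidder assigned to item $j$ spends her whole unit on $j$, and only $j$'s parent bidder can be assigned without contributing a full unit). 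You instead prove existence nonconstructively via the capacitated Hall condition: for any bidder set $S$, the $|S|$ units of money they spend land entirely on MBB items in $N(S)$, and since spending on item $j$ is at most $p^*_j$, you get $\sum_{j\in N(S)}\lfloor \alpha p^*_j\rfloor \geq \sum_{j\in N(S)}\lceil p^*_j\rceil \geq \sum_{j\in N(S)} p^*_j \geq |S|$. Your route is shorter and arguably cleaner, at the cost of invoking the (standard, via item cloning) deficiency version of Hall's theorem and of being nonconstructive, whereas the paper's forest argument explicitly exhibits the assignment. If you write this up, make two small points explicit: that total spending on item $j$ in the PF/CEEI solution is at most $p^*_j$ (this is what licenses $\sum_{j\in N(S)}p^*_j\geq |S|$), and that $\alpha p^*_j \geq \lceil p^*_j\rceil \geq 1$ so the hypothesis $p\geq 1$ of Lemma~\ref{lem:structure} is met (implicitly assuming $p^*_j>0$, as the paper also does, since otherwise the stated factor is degenerate).
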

\begin{proof}
%
We start by showing that there must exist a valid assignment at prices $fp^*$,
where $p^*$ corresponds to the PF prices and $f= \max_j \lceil p^*_j\rceil/p^*_j$.
Given any PF allocation $x^*$, we consider the bipartite graph on items and bidders that
has an edge between a bidder and an item if and only if $x^*$ assigns a portion of the item
to that bidder.
%
If there exists a cycle in this graph, one can remove an edge in this cycle by reallocating 
along the cycle while maintaining the valuation of every bidder. To verify that this is 
possible, note that all the items that a bidder is connected to by an edge are MBB items 
for this bidder, and therefore the bidder is indifferent regarding how her spending is 
distributed among them. Hence w.l.o.g.\ we can assume that the graph of $x^*$ is a forest.
 
For a given tree in this forest, root it at an arbitrary bidder. For each bidder in this 
tree, assign her to one of her child items, if any, and otherwise to her parent item.
Note that the MBB items for each bidder at prices $fp^*$ are the same as at prices $p^*$, 
so every bidder is assigned to one of her MBB items. Therefore, in order to conclude that 
this assignment is valid at prices $fp^*$ it is sufficient to show that the capacity constraints
are satisfied. The fact that $fp^*_j \geq \lceil{p^*_j}\rceil$ implies that 
$\left\lfloor fp^*_j \right\rfloor \geq \lceil{p^*_j}\rceil $, so we just need to show that,
for each item $j$, at most $\lceil{p^*_j}\rceil$ bidders are assigned to it.
To verify this fact, note that any bidder who is assigned to her parent item does not have child 
items so, in $x^*$, she is spending all of her unit of scrip money on that parent item. In other 
words, for any item $j$, the only bidder that may be assigned to it without having contributed to 
an increase of $j$'s PF price by 1 is the parent bidder of $j$ in the tree; thus, the total number
of bidders is at most $\lceil{p^*_j}\rceil$.

Now, let $q$ and $x$ denote the prices and the assignment computed by the SDM mechanism; 
by Lemma~\ref{lem:structure}, since there exists a valid assignment at prices $fp^*$, this 
implies that $q \leq fp^*$. The fact that the SDM mechanism assigns each bidder to one of 
her MBB items at prices $q$ implies that $v_i(x) = \max_j\{v_{ij}/q_j\}$. On the other hand, 
let $r$ be an MBB item of bidder $i$ at the PF prices $p^*$. If bidder $i$ had $b_i$ units 
of scrip money to spend on such MBB items, this would mean that $v_i(x^*)= b_i (v_{ir}/p^*_r)$ 
so, since $b_i=1$, this implies that $v_i(x^*)= v_{ir}/p^*_r$.  
Using this inequality along with the fact that $q_j \le fp^*_j$ for all items $j$, we can show that
\begin{equation*}
v_i(x) ~=~ \max_j\left\{\frac{v_{ij}}{q_j}\right\} ~\geq~ \frac{v_{ir}}{q_r} ~\geq~  \frac{v_{ir}}{fp^*_r} 
~=~ \frac{1}{f} \cdot v_i(x^*),
\end{equation*}
which implies that $v_i(x)\geq  \min_j\{p^*_j/\lceil p^*_j \rceil \} \cdot v_i(x^*)$ for any bidder $i$.
\end{proof}
%

\section{Connections to Mechanism Design with Money}\label{sec:withmoney}
In hindsight, a closer look at the mechanisms of this chapter reveals an interesting connection
between our work and known results from the literature on mechanism design {\em with money}. 
What we show in this section is that one can uncover useful interpretations of money-free mechanisms as 
mechanisms with actual monetary payments by instead considering appropriate logarithmic transformations 
of the bidders' valuations. In what follows, we expand on this connection for the two mechanisms that we 
have proposed.

\paragraph{Partial Allocation Mechanism}
We begin by showing that one can actually interpret the item fractions discarded by the Partial Allocation mechanism as VCG payments.
The valuation of player $i$ for the PA mechanism outcome is $v_i(x) = f_i \cdot v_i(x^*)$, or
\begin{equation}\label{eq:vcg}
v_i(x) ~=~ \left( \frac{\prod_{i'\neq i}{[v_{i'}(x^*)]^{b_{i'}}}}{\prod_{i'\neq i}{[v_{i'}(x^*_{-i})]^{b_{i'}}}}\right)^{1/b_i} \cdot v_i(x^*).
\end{equation}
Taking a logarithm on both sides of Equation~\eqref{eq:vcg} and then multiplying them by $b_i$ yields
\begin{equation}\label{eq:vcg2}
b_i \log v_i(x) = b_i \log v_i(x^*) -\left(\sum_{i'\neq i}{b_{i'}\log v_{i'}(x^*_{-i})} ~-~ \sum_{i'\neq i}{b_{i'}\log v_{i'}(x^*)} \right).
\end{equation}
Now, instead of focusing on each bidder $i$'s objective in terms of maximizing her valuation, we instead consider a logarithmic transformation of that
objective. More specifically, define $u_i(\cdot)=b_i \log v_i(\cdot)$ to be bidder $i$'s surrogate valuation. Since the logarithmic
transformation is an increasing function of $v_i$, for every bidder, her objective amounts to maximizing the value of this surrogate valuation.
Substituting in Equation~\eqref{eq:vcg2} using the surrogate valuation for each player gives 
\begin{equation*}
u_i(x) = u_i(x^*) - \left(\sum_{i'\neq i}{u_{i'}(x^*_{-i})} ~-~ \sum_{i'\neq i}{u_{i'}(x^*)} \right).
\end{equation*}
This shows that the surrogate valuation of a bidder for the output of the PA mechanism equals her
surrogate valuation for the PF allocation minus a ``payment'' which corresponds to exactly the externalities that 
the bidder causes with respect to the surrogate valuations! Note that, in settings where monetary payments 
are allowed, a VCG mechanism first computes an allocation that {\em maximizes the social welfare}, and then defines a set 
of monetary payments such that each bidder's payment corresponds to the externality that her presence causes. 
The connection between the PA mechanism and VCG mechanisms is complete if one notices that the PF 
objective aims to compute an allocation $x$ maximizing the value of $\sum_i{b_i \log {v_i(x)}}$, which is exactly
the social welfare $\sum_i{u_i(x)}$ with respect to the players' surrogate valuations. Therefore, the impact that the fraction being 
removed from each player's PF allocation has on that player's valuation is analogous to that of a VCG payment in the 
space of surrogate valuations. The fact that the PA mechanism is truthful can hence be deduced from the fact the
players wish to maximize their surrogate valuations and the VCG mechanism is truthful with respect to these valuations.
Nevertheless, the fact that the PA mechanism guarantees such a strong approximation of the PF solution remains surprising
even after revealing this reduction.

Also note that VCG mechanisms do not, in general, guarantee envy-freeness. The connection between the PA mechanism and VCG mechanisms
that we provide above, combined with the envy-freeness results that we proved for the PA mechanism for both additive linear and Leontief 
valuations, implies that the VCG mechanism is actually envy-free for settings with money and bidders having the corresponding surrogate 
valuations. Therefore, these results also contribute to the recent work on finding truthful, envy-free, and efficient mechanisms~\cite{CFFKO11, FL12}.

\paragraph{Strong Demand Matching Mechanism}
We now provide an even less obvious connection between the SDM mechanism and existing literature on mechanism
design with money; this time we illustrate how one can interpret the SDM mechanism as a stable matching mechanism. 
In order to facilitate this connection, we begin by reducing the problem of computing a valid assignment to the problem
of computing a ``stable'' matching:
we first scale each bidder's valuations so that her minimum non-zero valuation for an item is equal to $n$, and then, 
for each item $j$ we create $n$ copies of that item such that the $k$-th copy (where $k\in \{1,2,\dots,n\}$) of 
item $j$ has a reserve price $r_{jk}=k$. Given some price for each item copy, every buyer is seeking to be matched to 
one copy with a price that maximizes her valuation to price ratio, i.e.\ an MBB copy. A matching of each bidder to a
distinct item copy in this new problem instance is {\em stable} if and only if every bidder is matched to an MBB copy; it is
easy to verify that such a stable matching will always exist since there are $n$ copies of each item. Note that in a
stable matching any two copies of the same item, each of which is being matched to some bidder, need 
to have exactly the same price, otherwise the more expensive copy cannot be an MBB choice for the bidder matched to it. 

Now, a valid assignment of the initial input of the SDM mechanism implies a stable matching in the new problem instance:
set the price $p_{jk}$ of the $k$-th copy of item $j$ to be equal to the price $p_j$ of item $j$ in the valid assignment, 
unless this violates its reserve price, i.e.\ $p_{jk}=\max\{p_j , r_{jk}\}$, and match each bidder to a distinct copy
of the item that she was assigned to by the valid assignment; the validity of the assignment implies that, for each item $j$, 
the number of bidders assigned to it is at most $\lfloor p_j \rfloor$, and hence the number of item copies for which $p_{jk}\geq r_{jk}$,
i.e.\ $p_{jk}=p_j$ is enough to support all these bidders. 
Similarly, a stable matching of the item copies implies a valid assignment of the actual items of the initial problem instance: 
the price $p_j$ of each item $j$ is set to be equal to the minimum price over all its copies ($p_j=\min_k\{p_{jk}\}$), 
and each bidder who is matched to one of these copies is allocated a fraction $1/p_j$ of the corresponding actual item.


Using this reduction, we can now focus on the problem of computing such a stable matching of each bidder to just one distinct 
copy of some item; that is, we wish to define a price $p_{jk}\geq r_{jk}$ for each one of the $m\cdot n$ item copies, as 
well as a matching of each bidder to a distinct copy such that every bidder is matched to one of her MBB copies for the 
given prices. If we consider the same surrogate valuations $u_i(\cdot)=\log{v_i(\cdot)}$, the objective of each bidder $i$ to be matched to a copy of some item 
$j$ that maximizes the ratio $v_{ij}/p_{jk}$ is translated to the objective of maximizing the difference $\log{v_{ij}} - \log{p_{jk}}$.
If one therefore replaces the values $v_{ij}$ of the valuation vector reported by each bidder $i$ with the values $\log{v_{ij}}$,
then the initial problem is reduced to the problem of computing stable prices for these transformed valuations, assuming that monetary
payments are allowed. This problem has received a lot of attention in the matching literature, building upon the assignment model
of Shapley and Shubik~\cite{SS71}. Having revealed this connection, we know that we can truthfully compute a bidder optimal matching 
that does not violate the reserve prices using, for example, the mechanism of Aggarwal et al.~\cite{AMPP09}; one can verify that these
are exactly the logarithmic transformations of the prices of the SDM mechanism, and also that this is the matching the SDM mechanism 
computes. Note that increasing the surrogate prices of overdemanded item copies by some additive constant corresponds to increasing 
the corresponding actual prices by a multiplicative constant. Therefore, this transformation also sheds some light on why the SDM 
mechanism uses multiplicative increases of the item prices.

\section{Conclusion and Open Problems}

Our work was motivated by the fact that no incentive compatible mechanisms were 
known for the natural and widely used fairness concept of Proportional Fairness. 
In hindsight, our work provides several new contributions. First, the class of
bidder valuation functions for which our results apply is surprisingly large and
it contains several well studied functions; previous truthful mechanisms for 
fairness were studied for much more restricted classes of valuation functions. 
Second, to the best of our knowledge, this is first work that defines and 
gives guarantees for a strong notion of approximation for fairness, where one 
desires to approximate the valuation of every bidder. Finally, 
our Partial Allocation mechanism can be seen as a framework for designing truthful 
mechanisms without money. This mechanism can be generalized further by restricting 
the range of the outcomes (similar to maximal-in-range mechanisms when one can use 
money). Specifically, the set of feasible outcomes $\mathcal{F}$ can be restricted
to any downward closed subset of outcomes; that is, as long as 
$(x_{\!_1}, x_{\!_2}, ..., x_n)\in \mathcal{F}$ implies 
$(f_{\!_1} \cdot x_{\!_1},f_{\!_2}\cdot x_{\!_2}, ...,f_n\cdot x_n)\in \mathcal{F}$ 
for every set of scalars $f_i\in[0,1]$, then the mechanism remains well defined.
We believe that this generalization is a powerful one, and might allow for new solutions 
to other mechanism design problems without money.

In terms of open problems, the obvious one is to close the gap between the approximation
guarantee of Theorem~\ref{thm:approx} and the inapproximability result of Theorem~\ref{thm:LB1}.
According to these bounds, when all the bidders have equal $b_i$ values, the best possible 
approximation guarantee lies somewhere between $0.5$ and $0.75$ for two-bidder instances and
between $0.368$ and $0.5$ as the number of bidders goes to infinity.

Possibly the most interesting open problem though is the study of the
following natural objective: instead of aiming to maximize the minimum $v_i(x)/v_i(x^*)$
ratio across every bidder $i$, one may instead wish to maximize the product of all these ratios.
Note that maximizing this objective is equivalent to maximizing the PF objective $\prod_i v_i(x)$.
The Partial Allocation mechanism guarantees a $1/e$ approximation of the form
\begin{equation}\label{ineq:obj}
\left(\prod_i v_i(x)\right)^{1/n}\geq \frac{1}{e}\left(\prod_i v_i(x^*)\right)^{1/n}.
\end{equation}
On the other hand, the inapproximability result of Theorem~\ref{thm:LB1} does not apply to this objective
and hence one might hope to significantly improve the guarantee of Inequality~\eqref{ineq:obj}. The way to
do this would be to possibly sacrifice the value of some bidders, something that the objective studied in
this paper would not allow, in favor of this aggregate measure. Alternatively, one could prove stronger
inapproximability results showing that no such truthful mechanism exists.

Finally, a broader question that arises from this work has to do with the power of ``money burning''.
Specifically, one can verify that, in dealing with scale-free objectives such as the one studied in
this work and the one proposed above, discarding fractions of the bidders' resources allows the mechanism
designer to not worry about the actual scale of each bidder's valuations: the assumption of homogeneity
is sufficient for designing truthful mechanisms with non-trivial approximation guarantees. If, on the other
hand, discarding resources were disallowed and the mechanism designer were restricted to using only monetary 
payments, then homogeneity would not be sufficient and the scale of each bidder's valuations would need
to be elicited somehow before the appropriate payments could be chosen; this significantly complicates the
work of the mechanism designer. It would therefore be very interesting to better understand the settings for 
which ``money burning'' may lead to improved results despite the inefficiencies that it introduces.

\section{Acknowledgments} The second author would like to thank Vasilis Syrgkanis for his help in
clarifying the connection between the Partial Allocation mechanism and VCG payments.

\section*{APPENDIX}
\appendix
\renewcommand*{\thesection}{\Alph{section}}

\section{Omitted Proofs}\label{app:missing}

This Appendix includes the proofs that are missing from the main section.

\begin{proof}[of Lemma \ref{lem:bound}]

We first prove that this lemma is true for any number $k$ of pairs when $\beta_i=1$ for every pair.
For this special case we need to show that, if $\sum_{i=1}^k{\delta_i}\leq b$, then
\[ \prod_{i=1}^k{(1+\delta_i)} \leq \left(1 + \frac{b}{k}\right)^k .\]

Let $\bar{\delta}_i$ denote the values that actually maximize the left hand side of this inequality and
$\Delta_{k'}=\sum_{i=1}^{k'}\bar{\delta}_i$ denote the sum of these values up to $\bar{\delta}_{k'}$.
Note that it suffices to show that $\bar{\delta}_i = b/k$ for all $i$ since we have
\begin{equation*}
 \prod_{i=1}^k{(1+\delta_i)} \leq  \prod_{i=1}^k{(1+\bar{\delta}_i)},
\end{equation*}
and replacing 
$\bar{\delta}_i$
with
$b/k$ yields the inequality that we want to prove.

To prove that $\bar{\delta}_i=b/k$ we first prove that for any $k'\leq k$ and any
$i\leq k'$ we get $\bar{\delta}_i =\Delta_{k'}/k'$; we prove this fact by induction on $k'$:
For the basis step ($k'=2$) we show that $\bar{\delta}_1= \Delta_{2}/2$.
For any given value of $\Delta_2$ we know that any choice of $\delta_1$ will yield
\[ \prod_{i=1}^2{(1+\delta_i)} ~=~ (1+\delta_1)(1+\Delta_2 -\delta_1). \]

Taking the partial derivative with respect to $\delta_1$ readily shows that
this is maximized when $\delta_1 = \Delta_2/2$, thus $\bar{\delta}_1 = \Delta_2/2$.
For the inductive step we assume that $\bar{\delta}_i =\Delta_{k'-1}/(k'-1)$ for all $i\leq k'-1$.
This implies that for any given value of $\Delta_{k'}$, given a choice of $\delta_{k'}$
the remaining product is maximized if the following holds
\begin{equation*}
\prod_{i=1}^{k'}{(1+\delta_i)} =\left(1+\frac{\Delta_{k'}-\delta_{k'}}{k'-1}\right)^{k'-1}(1+\delta_{k'}).
\end{equation*}
Once again, taking the partial derivative of this last formula with respect to $\delta_{k'}$ for
any given $\Delta_{k'}$ shows that this is maximized when $\delta_{k'}=\Delta_{k'}/k'$. This of course
implies that $\Delta_{k'-1}=\frac{k'-1}{k'}\Delta_{k'}$ so $\bar{\delta}_i = \Delta_{k'}/k'$ for all $i\leq k'$.

This property of the $\bar{\delta}_i$ that we just proved, along with the fact that $\Delta_k\leq b$ implies
\begin{equation*}
 \prod_{i=1}^k{(1+\delta_i)} ~\leq~  \left(1 + \frac{\Delta_k}{k} \right)^k ~\leq~ \left(1 + \frac{b}{k} \right)^k.
\end{equation*}

We now use what we proved above in order to prove the lemma for
any rational $\delta_i$
using a proof by contradiction.
Assume that there exists a multiset $\mathcal{A}$ of pairs
$(\delta_i,\beta_i)$ with $\beta_i\geq 1$ and $\sum_i \beta_i \cdot \delta_i\leq b$
such that
\begin{equation}\label{ineq:contradiction}
\prod_i{(1+\delta_i)^{\beta_i}} > \left(1 + \frac{b}{B}\right)^B,
\end{equation}
where $B =\sum_i \beta_i$.
Let $M$ be an arbitrarily large value such that $\beta'_i=M\beta_i$ is a natural number for all $i$.
Also, let $b'=Mb$.
Then
$\sum_i \beta'_i \cdot \delta_i\leq b'$, and $B'=M\cdot B = \sum_i \beta'_i$.
Raising both sides of Inequality~\ref{ineq:contradiction}
to the power of $M$
yields
\[ \prod_i{(1+\delta_i)^{\beta'_i}} > \left(1 + \frac{b'}{B'}\right)^{B'}. \]
To verify that this is a contradiction, we create a multiset to which, for any pair $(\delta_i,\beta_i)$
of multiset $\mathcal{A}$, we add $\beta'_i$ pairs $(\delta_i,1)$. This multiset contradicts what we
showed above for the special case of pairs with $\beta_i=1$.

Extending the result to real valued $\delta_i$ just requires approximating the
$\delta_i$ closely enough with rational valued terms.
Specifically, let $\delta_i = \delta'_i + \epsilon_i$, where $\epsilon_i \ge 0$ and
$\delta'_i$ is rational.
Then $\sum_i \delta'_i \beta_i \le b$,
and by the result for rational $\delta$,
\[ \prod_i{(1+\delta'_i)^{\beta_i}} \le \left(1 + \frac{b}{B}\right)^B. \]
But then
\begin{eqnarray*}
\prod_i{(1+\delta_i)^{\beta_i}} & \le & \prod_i{(1+\delta'_i + \epsilon_i)^{\beta_i}} \\
& \le & \prod_i{ \left[ ( 1+\delta'_i) \left( 1 + \frac{\epsilon_i} {1+\delta'_i} \right) \right] ^{\beta_i}} \\
& \le & \left(1 + \frac{b}{B}\right)^B \prod_i{ \left( 1 + \frac{\epsilon_i} {1+\delta'_i} \right)  ^{\beta_i} }.
 \end{eqnarray*}
Since $\epsilon_i$ can be arbitrarily small, it follows that even for real valued $\delta_i$
\[ \prod_i{(1+\delta_i)^{\beta_i}} \le \left(1 + \frac{b}{B}\right)^{B}. \]
\end{proof}

\begin{proof}[of Lemma~\ref{lem:adaptedapprox}]
For any given approximate PF allocation $\widetilde{x}$, one can quickly verify that the valuation of bidder $i$ for her final allocation
only decreases as the value of $\prod_{i'\neq i}{[v_{i'}(\widetilde{x}_{-i})]^{b_{i'}}}$ increases. We can therefore assume that
the approximation factor is minimized
when the
denominator of Equation~\eqref{eq:approxfraction} takes on its maximum value, i.e. $\widetilde{x}_{-i}=x^*_{-i}$.
This implies that
the fraction in this equation will always be less than or equal to 1, and the valuation of bidder $i$ will therefore equal
\begin{align*}
\widetilde{f}_i \cdot v_i(\widetilde{x}) ~&\geq~ \left(\frac{\prod_{i'}{[v_{i'}(\widetilde{x})]^{b_{i'}}}}{\prod_{i'\neq i}{[v_{i'}(x^*_{-i})]^{b_{i'}}}}\right)^{1/b_i}\\
&\geq~(1-\epsilon)\left(\frac{\prod_{i'}{[v_{i'}(x^*)]^{b_{i'}}}}{\prod_{i'\neq i}{[v_{i'}(x^*_{-i})]^{b_{i'}}}}\right)^{1/b_i} \\
&=~ (1-\epsilon) f_i \cdot v_i(x^*).
\end{align*}
The first inequality holds because the right hand side is minimized when $\widetilde{x}_{-i}=x^*_{-i}$, and the second inequality holds because $\widetilde{x}$
is defined to be an allocation that approximates $x^*$.
%
The result follows on using Theorem~\ref{thm:approx}
to lower bound $f_i$.
%
\end{proof}

\begin{proof}[of Lemma~\ref{adaptedtruth}]
In the proof of the previous lemma we showed that, if bidder $i$ is truthful, then her valuation in the final allocation produced
by the adapted PA mechanism will always be at least $(1-\epsilon)$ times the valuation $f_i\cdot v_i(x^*)$ that she would receive if all
the PF allocations could be computed optimally rather than approximately. We now show that her valuation cannot be more
than $(1-\epsilon)^{-1}$ times greater than $f_i\cdot v_i(x^*)$, even if she misreports her preferences. Upon proving this statement,
the theorem follows from the fact that, even if
bidder $i$
being truthful
results in the worst possible approximation for this bidder,
still any lie can
increase her valuation by a factor of
at most
$(1-\epsilon)^{-2}$.

For any allocation $\widetilde{x}$ we know that $\prod_{i'}{[v_{i'}(\widetilde{x})]^{b_{i'}}}\leq \prod_{i'}{[v_{i'}(x^*)]^{b_{i'}}}$, by definition of PF.
Also, any allocation $\widetilde{x}_{-i}$ that the approximation algorithm may compute instead of $x^*_{-i}$ will satisfy
$\prod_{i'\neq i}{[v_{i'}(\widetilde{x})]^{b_{i'}}}\geq (1-\epsilon) \prod_{i'\neq i}{[v_{i'}(x^*)]^{b_{i'}}}$.
Using Equation~\eqref{eq:approxfraction} we can thus infer that
no matter what the
computed allocations $\widetilde{x}$ and $\widetilde{x}_{-i}$ are,
bidder $i$
will experience a valuation of at most
\begin{align*}
\left(\frac{\prod_{i'}{[v_{i'}(\widetilde{x})]^{b_{i'}}}}{\prod_{i'\neq i}{[v_{i'}(\widetilde{x}_{-i})]^{b_{i'}}}}\right)^{1/b_i} ~
&\leq~ \left(\frac{\prod_{i'}{[v_{i'}(x^*)]^{b_{i'}}}}{\prod_{i'\neq i}{[v_{i'}(\widetilde{x}_{-i})]^{b_{i'}}}}\right)^{1/b_i}\\
&\leq~(1-\epsilon)\left(\frac{\prod_{i'}{[v_{i'}(x^*)]^{b_{i'}}}}{\prod_{i'\neq i}{[v_{i'}(x^*_{-i})]^{b_{i'}}}}\right)^{1/b_i} \\
&\leq~ (1-\epsilon)~ f_i \cdot v_i(x^*).
\end{align*}
\end{proof}

\begin{proof}[of Lemma~\ref{lem:concave}]
As the valuation functions are all concave and homogeneous
of degree one, so is the following product,
\begin{equation*}
\left(  \prod_i [v_i(x)]^{b_i }\right)^{1/B}.
\end{equation*}
Also, note that this product has the same optima as the PF objective.
Consequently the above optimization is an instance of convex programming with linear
constraints, which can be solved approximately in polynomial time.
More precisely, an approximation with an additive error of $\epsilon$ to the optimal product of the
valuations can be found in time polynomial in the problem instance size and $\log(1/\epsilon)$~\cite{Nemirovski06}.
In addition, the approximation is a feasible allocation.

We normalize the individual valuations to have a value 1 for an allocation of everything.
If $B=\sum_i b_i$ is the sum of the bidders' weights then, at the optimum,  bidder $i$
has valuation at least $b_i/B$. To verify that this is true, just note that the sum of the prices of all
goods in the competitive equilibrium 
will be $B$ and
bidder 
$i$
will have a budget
of $b_i$. Since 
each
bidder will spend all her budget on the items she values the most for the prices
at hand, her valuation for her bundle will have to be at least $b_i/B$. This implies that
the optimum product valuation is at least $\prod_i (b_i/B)^{b_i/B} \ge \min_i b_i /B$;
this can be approximated to within an additive factor $\epsilon \cdot \min_i b_i/B$
in time polynomial in $ \log 1/\epsilon + \log B$,
and this is an approximation to within a multiplicative factor of $1 - \epsilon$.
\end{proof}

\bibliographystyle{elsarticle-num-names}
\bibliography{MARA}



\end{document}